\documentclass[12pt,draftclsnofoot,onecolumn]{IEEEtran}

\newcommand\norm[1]{\left\lVert#1\right\rVert}

\usepackage{graphicx}
\ifCLASSINFOpdf
\else
\fi

\makeatletter
\newcommand*{\rom}[1]{\expandafter\@slowromancap\romannumeral #1@}
\makeatother
%
\usepackage{float}
\usepackage{mathtools}
\DeclarePairedDelimiter\ceil{\lceil}{\rceil}

\usepackage{subcaption} 
\usepackage{amsmath}
\usepackage{amssymb}
\usepackage{amsthm}
\usepackage{amsfonts}
\usepackage{stackengine}
\usepackage{xcolor}
\usepackage[noadjust]{cite}
\def\delequal{\mathrel{\ensurestackMath{\stackon[1pt]{=}{\scriptstyle\Delta}}}}
\newtheorem{theorem}{Theorem}

\theoremstyle{definition}
\newtheorem{definition}{Definition}
\newtheorem{prop}{Proposition} 
\theoremstyle{remark}
\newtheorem{remark}{Remark}

\hyphenation{op-tical net-works semi-conduc-tor}

\begin{document}

\title{A New Analytical Approximation of the Fluid Antenna System Channel}

\author{\IEEEauthorblockN{ Malek Khammassi\IEEEauthorrefmark{1},
Abla Kammoun\IEEEauthorrefmark{1},
Mohamed-Slim Alouini\IEEEauthorrefmark{1}}\\
\IEEEauthorblockA{\IEEEauthorrefmark{1}
King Abdullah University of Science and Technology,
Thuwal, Saudi Arabia} \\
\IEEEauthorblockA{
Email: \{malek.khammassi, abla.kammoun, slim.alouini\}@kaust.edu.sa}
\thanks{Due to space limitation, most of the proofs of theorems, propositions and corollaries have been omitted and they can be found in our technical report \cite{meeeeeee}.}
}

        

%
%



\maketitle

\begin{abstract}

Fluid antenna systems (FAS) are an emerging technology that promises a significant diversity gain even in the smallest spaces. It consists of a freely moving antenna in a small linear space to pick up the strongest received signal. Previous works in the literature provide a simple yet insightful parameterization of the FAS channel that leads to single-integral expressions of the probability of outage and various insights on the achievable performance. Nevertheless, this channel model may not accurately capture the correlation between the FAS ports, given by Jake's model. This work builds on the state-of-the-art by incorporating more parameters into the channel model to accurately approximate the FAS channel distribution while maintaining analytical tractability. The approximation is performed in two stages. The first stage approximation considerably reduces the number of multi-fold integrals in the probability of outage expression, while the second stage approximation represents it in a single integral form. Numerical results validate our approximations of the FAS channel model and demonstrate a limited performance gain under a more accurate correlation model. Further, our work opens the door for future research to investigate scenarios in which the FAS provides a performance gain compared to the current multiple antenna solutions.

\end{abstract}

\begin{IEEEkeywords}
Diversity, fluid antennas, MIMO, multiple
antennas, selection combining, outage probability, correlated channels.
\end{IEEEkeywords}

\IEEEpeerreviewmaketitle

\section{Introduction}

\IEEEPARstart{F}{or} the past few decades, multiple-input multiple-output (MIMO) has been one of the most celebrated wireless communication technologies. The philosophy behind MIMO consists of exploiting multipath, which, for very long, has been considered undesirable, to multiply the capacity. Although the earliest ideas relating capacity gain to multipath were hard to accept, MIMO has shown an undeniable performance gain, and therefore, it has become an essential component of wireless communication standards. MIMO allows data multiplexing over channels undergoing independent fading. However, the rule of thumb is to spatially separate the antennas by at least half the radiation wavelength to ensure diversity gain. 

The authors in \cite{FAS1, secondorderstat, fama, portselection} have recently questioned this rule. Motivated by the recent trend of using liquid metals such as Galinstan and ionized solutions such as sodium chloride for antennas \cite{liq1,liq2,liq3,liq4, liq5,liq6}, Wong \textit{et al.} hypothesize a system where a single antenna can switch locations instantly in a small linear space and refer to it as a fluid antenna system (FAS). They refer to the possible positions of the antenna as ports, and they investigate a scenario where the antenna can switch to the port with the strongest signal in the manner of traditional selection diversity. This system can ensure the implementation of multiple antennas at the receiver's side without space limitations. The analysis of the first-order statistics of FAS in \cite{FAS1} shows a probability of outage that decreases as the number of ports increases. Furthermore, it shows that FAS can outperform maximal-ratio combining (MRC), for a large enough number of ports. In \cite{secondorderstat}, the second-order statistics have been studied where Wong \textit{et al.} derived the ergodic capacity and lower-bounded it. They also derived the level crossing rate and average fade duration of FAS. Their analysis shows a considerable capacity gain resulting from the diversity hidden in a small space of FAS. In \cite{fama}, an extension of FAS for multiple access has been proposed. The analysis of the outage probability and average outage capacity characterizes the multiplexing gain of the fluid antenna multiple access, and shows its capability to support hundreds of users using only one fluid antenna at each user. Further, the problem of port selection was addressed in \cite{portselection}.  \textcolor{black}{In principle, selecting the port with the strongest signal requires SNR observations from all the ports. This may be unfeasible in practice since switching from port to port for SNR observations can result in unbearable delays. Authors in \cite{portselection} use a combination of machine learning methods and analytical approximation on a few observed ports to estimate the strongest signal. Specifically, they show that observing only 10\% of the ports provides more than an order of magnitude reduction in the outage probability.}

The state-of-the-art works demonstrate considerable potential for the arising technology of FAS. Nevertheless, assessing its achievable performance depends entirely on diversity reception over highly correlated channels that follow Jake's model \cite{stuber}. Moreover, while closed-form or single-fold integrals have been derived for the probability of outage of independent channels \cite{goldsmith}, they are usually challenging to obtain for diversity receptions over arbitrarily correlated channels. More specifically, the probability of outage of an \textcolor{black}{\textit{N}}-branch selection combiner in a multi-antenna system is written in terms of the multivariate cumulative distribution function (CDF) of the channel gains.
However, for correlated fading channels, these CDFs have intractable expressions involving nested integrals of the Marcum Q-function. An extensive attention has been dedicated, in the literature, to derive tractable mathematical expressions for the multivariate probability density functions (PDFs) and CDFs of these distributions. One category of approaches considers arbitrary covariance matrices, however, it restricts the number of branches. For instance, simplified CDF representations have been derived for selection combining over correlated channels for bivariate \cite{bivariate1, bivariate2, bivariate3, bivariate4}, trivariate and quadrivariate distributions \cite{trivariate1, trivariate2, trivariate3andquadrivariate1, trivariate4andquadrivariate2}. Another category of approaches considers an arbitrary number of branches but restricts the covariance matrix to specific forms. For instance, the constant correlation model was heavily studied in the literature, and simplified CDF expressions were derived \cite{equal1, equal2}. A more generalized covariance model was considered in \cite{general1, general2, general3}. Nevertheless, the covariance matrix was constrained to a certain form as in \cite[eq. (2)]{general2}. The most general framework to derive the distribution of an \textcolor{black}{\textit{N}}-branch selection combiner over correlated channels provides an \textcolor{black}{\textit{N}} multi-fold integral expression \cite{very_general}, which is not insightful or computationally efficient. A review of selection combining receivers over correlated channels can be found in \cite{survey_corr}. The authors in \cite{FAS1} parameterize the channel coefficients as in \cite{general2}, which simplifies the derivation of the probability of outage. However, although this parameterization is an essential first step towards gaining insight into the technology, it imposes a structure on the covariance model that may not accurately capture the dependence between the FAS ports given by Jake's model.

This paper aims to study the performance of the FAS under a more accurate correlation model that closely follows Jake's model. More specifically, \textcolor{black}{we incorporate more parameters into the channel model from \cite{FAS1} for more flexibility to approximate Jake's correlation model at the cost of increased complexity. However, by carefully choosing our proposed channel model parameters, we derive a tractable expression for the probability of outage that well approximates the simulation results of FAS.} The approximation is made in two stages. In the first stage, the number of multi-fold integrals in the probability of outage is considerably reduced. In the second stage, we investigate a more insightful probability of outage expression by approximating the result of the first stage approximation with a power of a single integral. \textcolor{black}{Numerical results validate our approximations and show a limited performance gain compared to \cite{FAS1}. This is due to our channel model being more complex to capture the detrimental effect of the highly correlated ports on diversity gain.} Therefore, careful modeling and analysis of the FAS channel are required to determine scenarios where this technology can outperform traditional multiple antenna technologies.

The remainder of this paper is organized as follows. In section \rom{2}, we present the FAS channel. In section \rom{3}, our proposed FAS channel model is presented. Our model provides both an exact and an approximated FAS channel representation. However, unlike the exact modeling, the approximated modeling of the FAS channel gives more tractable expression for probability of outage. Section \rom{4} presents our numerical results and section \rom{5} provides some concluding remarks.

\textit{\textbf{Notation}}: We use boldface upper and lower case letters for matrices and column vectors, respectively. $\textcolor{black}{\mathrm{E}}[.]$, $\textcolor{black}{\mathrm{V}}[.]$ and $\textcolor{black}{\mathrm{Cov}}[.,.]$ denote the statistical expectation, variance and covariance respectively. $(.)^{\textcolor{black}{\mathrm{T}}}$ and $|.|$ stand for the transpose and magnitude respectively and $(.)_{m,n}$ denotes the element in row $m$ and column $n$. \textcolor{black}{ We use \#$(.)$ to denote cardinality}. Finally, $\mathcal{CN}$ denotes the circularly-symmetric complex normal distribution.

\section{FAS Channel}
In this paper, we follow the same abstraction of FAS as in \cite{FAS1}.  We consider a single antenna that can move freely along \textcolor{black}{\textit{N}} equally distributed positions (i.e., ports) on  a linear space of length $W\lambda$, where $\lambda$ is the wavelength of the radiation, as it is shown in \textcolor{black}{F}ig. \ref{fig:FAS}. Therefore, taking the first port as a reference point, the distance between the first port and the $k$-th port is given by

\begin{equation}
    \Delta d_{k,1} = \frac{k-1}{N-1} W \lambda, ~ \text{for } k = 1, 2, \ldots, N.  
\end{equation}

The received signal by the $k$-th port can be modeled as
\begin{equation}
    \textcolor{black}{r_k} = g_k \textcolor{black}{q} + n_k, ~ \text{for } k = 1, 2, \ldots, N.
\end{equation}
\noindent where $\textcolor{black}{q}$ is the transmitted data symbol, \textcolor{black}{$n_k$ is a complex additive white Gaussian noise (AWGN) at the $k$-th port, with zero mean and variance $\sigma_n ^2$, and $g_k$ is the flat fading coefficient at the $k$-th port, following a circularly symmetric complex Gaussian
distribution with zero mean and variance $\sigma^2$.}

Furthermore, we assume that the channel coefficients of $\textbf{g} = (g_1, g_2, \ldots, g_N)^{\textcolor{black}{\mathrm{T}}}$ are correlated with a covariance matrix $\textcolor{black}{\boldsymbol{\Sigma}}_g$. Assuming two dimensional isotropic scattering with an isotropic receiver port, as in \cite{FAS1}, the spatial separation between the ports of FAS yields a difference in the phases of arriving paths, thus inducing correlation between the channels following Jake's model \cite{stuber} as 
\begin{equation}
(\boldsymbol{\Sigma}_g)_{k,\ell} = \mathrm{Cov}[g_{k}, g_{\ell}] = \sigma^{2} J_{0}\left(2 \pi \frac{\Delta d_{k, \ell}}{\lambda}\right)= \sigma^{2} J_{0}\left(\frac{2 \pi(k-\ell)}{N-1} W\right).
\end{equation}
\noindent where $J_0(.)$ is the zero-order Bessel function of the first kind. 

Further, the average signal-to-noise ratio (SNR) at each port is given by
\begin{equation}
    \Gamma = \textcolor{black}{ \frac{ \textcolor{black}{\mathrm{E}}[|g_k \textcolor{black}{q}|^2]}{\textcolor{black}{\mathrm{E}}[|n_k|^2]} =  \frac{\sigma^2 \textcolor{black}{\mathrm{E}}[|\textcolor{black}{q}|^2]}{\sigma_n ^2}} = \sigma^2 \Theta, 
\end{equation}

\noindent where $\Theta \delequal \frac{ \textcolor{black}{\mathrm{E}}[|\textcolor{black}{q}|]^2}{\sigma_n ^2}$. Assuming that the FAS can instantly\textcolor{black}{ \footnote{ \textcolor{black}{The instant switching between ports can be hard to achieve in practice since moving physical materials results in delay. However, one direction towards fast port switching consists of considering smaller antenna sizes at higher frequencies \cite{FAS1}. Another possibility consists of incorporating pixel antennas into the FAS design. Specifically, FAS can be based on an array of digitally controlled mini pixels that go on and off with a negligible delay \cite{pixels}.}}} switch to \textcolor{black}{the position of the maximum magnitude of the channel coefficients as in \cite{FAS1}, we are interested in the distribution of the following random variable}

\begin{equation}
    g_{\textcolor{black}{\mathrm{FAS}}} = \max{\{|g_1|, |g_2|,\ldots, |g_N|\}}.
\end{equation}

\begin{figure}[!t]
\centering
\includegraphics[width=\linewidth]{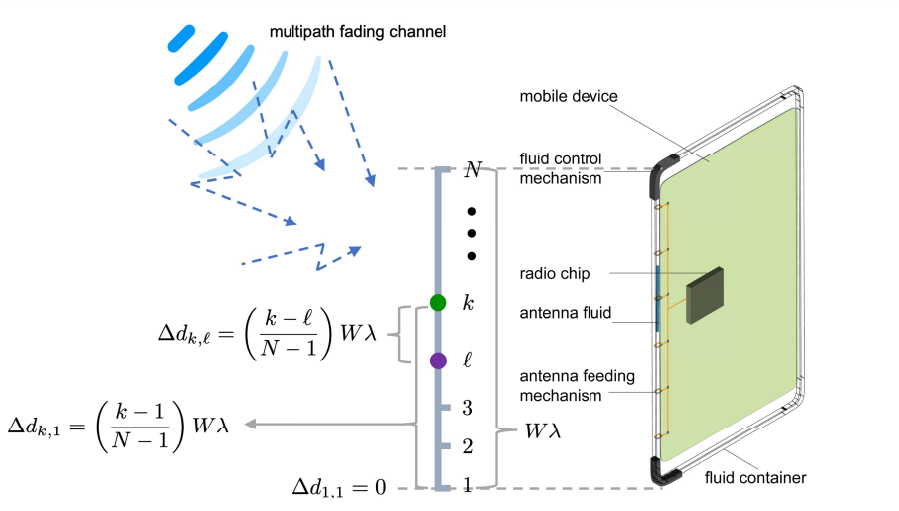}
\caption{ Architecture of FAS \cite{FAS1}.}
\label{fig:FAS}
\end{figure}

\section{Proposed FAS Channel Model}

\subsection{General Model for Arbitrarily Correlated Rayleigh Channels} \label{SectionGeneralModel}

We propose to represent a set of correlated Rayleigh fading channels $h_1, h_2,\ldots, h_N$ as 
\begin{equation}
h_k = \sigma_h \sqrt{1-\sum_{l=1}^{M} \alpha_{k, l}^{2}}~~(x_{k}+\textcolor{black}{\mathrm{j}} y_k)+ \sigma_h \sum_{l=1}^{M} \alpha_{k, l} (a_{l}+\textcolor{black}{\mathrm{j}} b_{l}), 
\label{eq4}
\end{equation}
\noindent where \textcolor{black}{$\mathrm{j}$ is the imaginary unit}, $x_1,\ldots, x_N, y_1,\ldots, y_N, a_1,\ldots, a_M, b_1,\ldots, b_M$ are independent and identically distributed (i.i.d) normal random variables with  zero-mean and variance $\frac{1}{2}$. Furthermore, $\alpha_{1,1},\ldots, \alpha_{1,M},\ldots,$ $\alpha_{N,1},\ldots, \alpha_{N,M}$, $M$ and $\sigma_h $ are parameters to be chosen according to the different correlations between the channels. We can see that according to the model
\begin{align}
    h_k &\sim \mathcal{CN}(0,\,\sigma_{h}^{2})\, ~\forall k \in \{1, \ldots, N\}, \\
    (\boldsymbol{\Sigma}_h)_{\textcolor{black}{m,n}} &= \left\{
    \begin{array}{ll}
        \sigma_h^2 & \mbox{if } \textcolor{black}{m}=\textcolor{black}{n} \\
        \sigma_h^2 \sum_{l=1}^{M} \alpha_{\textcolor{black}{m},l} \alpha_{\textcolor{black}{n},l} & \mbox{if } \textcolor{black}{m} \neq \textcolor{black}{n}
        \label{cov_h}
    \end{array},
\right.
\end{align}
where $\boldsymbol{\Sigma}_h$ is the covariance matrix of $\textbf{h}=(h_1, h_2,\ldots, h_N)^{\textcolor{black}{\mathrm{T}}}$.

We can see that the channel modeling in \cite{FAS1, secondorderstat , fama , portselection} is a particular case of the model in (\ref{eq4}) with $M=1$, $\sigma_h = \sigma$, $\alpha_{1,1} = 1$ and $\alpha_{k,1} = J_{0} \left( \frac{2 \pi (k-1)}{(N-1)} W\right) $ for $k \in \{2, \ldots, N\}$. Therefore, we can see from (\ref{cov_h}) that the correlation matrix of \cite{FAS1} does not exactly follow Jake's model because $\mathrm{Cov} [h_k, h_l] = \sigma^2 J_{0} \left( \frac{2 \pi (k-1)}{(N-1)} W\right) J_{0} \left( \frac{2 \pi (\ell-1)}{(N-1)} W\right)  \neq \sigma^2 J_{0} \left( \frac{2 \pi (k-\ell)}{(N-1)} W\right)$ for $k, \ell \in \{2, \ldots, N\}$ . \textcolor{black}{This problem has been pointed out also in \cite{9830377}.}

\textcolor{black}{The intuition behind our proposed model is to start by a model as in \cite{FAS1, secondorderstat , fama , portselection}, then add more flexibility by incorporating more parameters, at the cost of increased complexity.} In the following, we start by finding our model's parameters for an exact representation of $\textbf{g}=(g_1, g_2,\ldots, g_N)^{\textcolor{black}{\mathrm{T}}}$ as in (\ref{eq4}). Then, motivated by this representation, we choose a different set of model parameters that approximate the joint distribution of $\textbf{g}=(g_1, g_2,\ldots, g_N)^{\textcolor{black}{\mathrm{T}}}$ while maintaining mathematical tractability of the CDF of $g_{\textcolor{black}{\mathrm{FAS}}} = \max{\{|g_1|, |g_2|,\ldots, |g_N|\}}$, under a less-idealized correlation model than the one considered in \cite{FAS1, secondorderstat , fama , portselection}.

\subsection{Exact Model} \label{exact_model} 

\indent In order to represent $\textbf{g}=(g_1, g_2,\ldots, g_N)^{\textcolor{black}{\mathrm{T}}}$ as in (\ref{eq4}), we choose the parameters $\sigma_h$, $M$ and $\alpha_{k,l}$ for $k=1,\ldots, N$ and $l=1, \ldots, M$, such that the random vectors $\textbf{g}=(g_1, g_2,\ldots, g_N)^{\textcolor{black}{\mathrm{T}}}$ and $\textbf{h}=(h_1, h_2,\ldots, h_N)^{\textcolor{black}{\mathrm{T}}}$ have the same joint distribution.

\begin{theorem}

\textcolor{black}{Let $s_1, s_2,\ldots, s_N$ be the non-increasingly ordered eigenvalues of $\boldsymbol{\Sigma}_g$ and let  $\textbf{u}_1, \textbf{u}_2,\ldots, \textbf{u}_N$ be their respective associated eigenvectors with $\textbf{u}_l = (u_{1,l},\ldots, u_{N,l})^{\textcolor{black}{\mathrm{T}}},~ 1 \leq l \leq N $.}

\textcolor{black}{If $\sigma_h = \sigma$, $M=N$ and $\alpha_{k,l} = \frac{\sqrt{s_l}}{\sigma} u_{k,l},~ 1\leq k,l \leq N $, then $\textbf{g}=(g_1, g_2,\ldots, g_N)^{\textcolor{black}{\mathrm{T}}}$ and $\textbf{h}=(h_1, h_2,\ldots, h_N)^{\textcolor{black}{\mathrm{T}}}$ have the same joint distribution. In this case, $\textbf{g}=(g_1, g_2,\ldots, g_N)^{\textcolor{black}{\mathrm{T}}}$ can be represented using model (\ref{eq4}) as 
\begin{equation}
g_k \overset{\operatorname{\mathrm{d}}}{=} \sum_{l=1}^{N} \sqrt{s_l} u_{k,l} (a_{l}+\textcolor{black}{\mathrm{j}}~b_{l}),~ \text{for } k\in \{1,\ldots, N\}, \label{eq:theo12}
\end{equation}
where the operator $\overset{\operatorname{\mathrm{d}}}{=}$ denotes equality in the sense of distribution. }
\label{theo1}
\end{theorem}

\begin{proof}
    \textcolor{black}{See Appendix \ref{appendixA}}.
\end{proof}

We can see that the model parameters choice in \textcolor{black}{T}heorem \ref{theo1} provides an exact representation of the correlated channel vector $\textbf{g}$ using model (\ref{eq4}). Nevertheless, the representation in (\ref{eq:theo12}) is only in terms of the random variables $a_1,\ldots,a_N$, $b_1, \ldots, b_N$ and does not include the random variables $x_1,\ldots,x_N$, $y_1,\ldots, y_N$. This is due to the choice of model parameters making $\sum_{l=1}^{M} \alpha_{k, l}^{2} = 1$, and thus, multiplying the terms $\{x_k + \textcolor{black}{\mathrm{j}} y_k, \forall k \in \{1, \ldots, N\} \} $ by zeros in the model. However, the random variables $x_1,\ldots,x_N$, $y_1,\ldots, y_N$ can be very useful in deriving the CDF of $g_{\textcolor{black}{\mathrm{FAS}}} = \max{\{|g_1|, |g_2|,\ldots, |g_N|\}}$, as it will be shown in the following. In the case of the exact model, the CDF of the selection combiner over $\textbf{g}=(g_1, g_2,\ldots, g_N)^{\textcolor{black}{\mathrm{T}}}$, and equivalently, the CDF of the FAS channel, can only be written in the form of \textcolor{black}{\textit{N}}-fold integrals for $N>3$ \cite{Lfolds}. To further simplify the expression of the CDF, we approximate the joint distribution of $\textbf{g} = (g_1, g_2,\ldots, g_N)^{\textcolor{black}{\mathrm{T}}}$ using the same model (\ref{eq4}) in two stages.

\subsection{Approximated Model}

\subsubsection{First Stage Approximation} \label{sec:1stage_approx}\hfill

\begin{definition}
We define the random vector $\hat{\textbf{g}}=(\hat{g}_1, \hat{g}_2,\ldots, \hat{g}_N)^{\textcolor{black}{\mathrm{T}}}$, the approximation of the channel vector $\textbf{g}=(g_1, g_2,\ldots, g_N)^{\textcolor{black}{\mathrm{T}}}$, as 
\begin{align}
    &\hat{g}_k = \sqrt{ \sigma^2 -\sum_{l=1}^{\epsilon\text{-rank}} s_l u_{k, l}^{2}}~~(x_{k}+\textcolor{black}{\mathrm{j}} y_k)+ \sum_{l=1}^{\epsilon\text{-rank}} \sqrt{s_l} u_{k,l} (a_{l}+\textcolor{black}{\mathrm{j}} b_{l}),~~ \forall k \in \{1,\ldots, N\}, 
    \label{approx_cond}
\end{align}
where $\epsilon$-rank is the number of eigenvalues of $\boldsymbol{\Sigma}_g$ exceeding a threshold $\epsilon >0$, $s_1, s_2,\ldots, s_{\epsilon\text{-rank}}$ are the non-increasingly ordered eigenvalues of $\boldsymbol{\Sigma}_g$ greater than $\epsilon$ and $\textbf{u}_1, \textbf{u}_2,\ldots, \textbf{u}_{\epsilon\text{-rank}}$ are their respective associated eigenvectors with $\textbf{u}_l = (u_{1,l},u_{2,l},\ldots, u_{N,l})^{\textcolor{black}{\mathrm{T}}} \text{ for } l\in \{1, 2,\ldots, \epsilon\text{-rank}\}$.  
\label{def:approx}
\end{definition}

Similarly to the exact model (\textcolor{black}{T}heorem \ref{theo1}), \textcolor{black}{D}efinition \ref{def:approx} of the approximation takes the parameters $\sigma_h = \sigma$ and $\alpha_{k,l} = \frac{\sqrt{s_l}}{\sigma} u_{k,l},~ 1\leq k\leq N,~1\leq l\leq M$. However, unlike the exact representation where $M=N$, we introduce more flexibility to the approximation by taking $M = \epsilon\text{-rank}$ for $\epsilon >0$. Then, we investigate the choice of $\epsilon >0$, and equivalently the choice of $\epsilon\text{-rank}$, that provides an analytically tractable approximation of the CDF of $g_{\textcolor{black}{\mathrm{FAS}}} = \max{\{|g_1|, |g_2|,\ldots, |g_N|\}}$ while ensuring that $\hat{\textbf{g}} = (\hat{g}_1, \hat{g}_2,\ldots, \hat{g}_N)^{\textcolor{black}{\mathrm{T}}}$ is close to $\textbf{g} = (g_1, g_2,\ldots, g_N)^{\textcolor{black}{\mathrm{T}}}$ in the sense of distribution.

\begin{theorem}
  \textcolor{black}{Let $W_2$ be the Fréchet distance between two distributions \cite{FrechetDistance}. It should hold that
  \begin{equation}
      W_2(\mathcal{CN}(\textbf{0}_{N \times 1}, \boldsymbol{\Sigma}_g),\mathcal{CN}(\textbf{0}_{N \times 1}, \boldsymbol{\Sigma}_{\hat{g}})) \leq N \epsilon + (N- \epsilon\text{-rank}) \epsilon^2,
  \end{equation}
  where $\boldsymbol{\Sigma}_{\hat{g}}$ denotes the covariance matrix of $\hat{\textbf{g}} = (\hat{g}_1, \hat{g}_2,\ldots, \hat{g}_N)^{\textcolor{black}{\mathrm{T}}}$.}
  \label{distance}
\end{theorem}

\begin{proof} 
See Appendix \ref{appendixB}
\end{proof}

\begin{theorem}
Consider the random vector $\textbf{g} = (g_1, g_2,\ldots, g_N)^{\textcolor{black}{\mathrm{T}}}$ and its approximation $\hat{\textbf{g}} = (\hat{g}_1, \hat{g}_2,\dots, \hat{g}_N)^{\textcolor{black}{\mathrm{T}}}$ given in \textcolor{black}{D}efinition  \ref{def:approx}.
\begin{align}
    \max{\{|\hat{g}_1|, |\hat{g}_2|,\ldots, |\hat{g}_N|\}} &\xrightarrow{\textcolor{black}{\mathrm{d}}} \max{\{|g_1|, |g_2|,\ldots, |g_N|\}} = g_{\textcolor{black}{\mathrm{FAS}}}, ~\text{as}~ \epsilon \xrightarrow{} 0.
\end{align}
\label{convergence}
\end{theorem}

\begin{proof} 
Let $f: \mathbb{C}^N \xrightarrow{} \mathbb{R}$ such that $ f(z_1,\ldots, z_N) = \max\{|z_1|,\ldots, |z_N|\}$. The function $f$ is continuous, \textcolor{black}{and we can see that $\hat{\textbf{g}} \xrightarrow{\mathrm{d}} \textbf{g}, ~\text{as}~ \epsilon \xrightarrow{} 0$}.  Therefore, by the continuous mapping theorem \cite{wiki}, $f(\hat{\textbf{g}}) \xrightarrow{\textcolor{black}{\mathrm{d}}} f(\textbf{g}), ~\text{as}~ \epsilon \xrightarrow{} 0$. 
\end{proof}

\textcolor{black}{Theorem \ref{distance} provides an upper-bound on the Fréchet distance between the distribution of $\textbf{g}$ and its approximation $\hat{\textbf{g}}$. Therefore, it allows to assess the precision level of the approximation as a function of $\epsilon$ and $\epsilon$-rank. We can see that for a fixed $N$, the smaller $\epsilon$, the higher the precision level of the approximation.} By \textcolor{black}{T}heorem \ref{convergence}, we can approximate $g_{\textcolor{black}{\mathrm{FAS}}} = \max \{ |g_1|,\ldots,|g_N|\}$ by approximating the channel vector $\textbf{g} = (g_1, \ldots, g_N)^{\textcolor{black}{\mathrm{T}}}$. Furthermore, \textcolor{black}{by Theorems \ref{distance} and \ref{convergence}, the approximation improves as $\epsilon$ decreases, which in turn increases $\epsilon\text{-rank}$}. Therefore, the best possible approximation is obtained by maximizing $\epsilon\text{-rank}$ and taking it equal to $N$. However, if $M = \epsilon\text{-rank} = N$ then, by \textcolor{black}{T}heorem \ref{theo1}, we obtain an exact representation of the channel vector $\textbf{g} = (g_1, \ldots, g_N)^{\textcolor{black}{\mathrm{T}}}$ in terms of only the random variables $a_1,\ldots,a_N$ and $b_1, \ldots,b_N$. In this case, the CDF of the FAS channel can only be written in the form of \textcolor{black}{\textit{N}}-fold integrals for $N>3$ \cite{Lfolds}. 

On the other hand, including the random variables $x_1,\ldots,x_N$, $y_1,\ldots, y_N$ in the representation of the approximation can lead to a more simplified expression of the joint CDF of $(|g_1|,\ldots,|g_N|)$, and consequently, the CDF of $g_{\textcolor{black}{\mathrm{FAS}}} = \max{\{|g_1|, |g_2|,\ldots, |g_N|\}}$. In fact, by representing $\textbf{g}$ in terms of $x_1,\ldots,x_N$, $y_1,\ldots, y_N$, the random vector $(|g_1|,\ldots,|g_N|)$ becomes independent conditionally on $a_1,\ldots,a_M$, $b_1, \ldots, b_M$. As a result, the conditional joint CDF of the channel magnitudes becomes the product of the conditional CDFs of each channel magnitude, which constitutes the first step into deriving the CDF of the FAS channel. Hence, we choose $\epsilon$ such that the approximation in \textcolor{black}{D}efinition \ref{def:approx} remains in terms of all the random variables $x_1,\dots, x_N$, $y_1,\dots, y_N$, $a_1,\dots, a_M$ and $b_1,\dots, b_M$. In other words, we choose $\epsilon\text{-rank} < N$.

\begin{theorem} [\textbf{Cumulative Distribution Function}]\hfill\\
    Consider the approximation $\hat{\textbf{g}} = (\hat{g}_1, \hat{g}_2,\dots, \hat{g}_N)^{\textcolor{black}{\mathrm{T}}}$ given in \textcolor{black}{D}efinition \ref{def:approx} for $\epsilon > 0$ such that $\epsilon\text{-rank} < N$. The CDF of $\max \{|\hat{g}_1|,\ldots,|\hat{g}_N|\} $ is given by
    \begin{align}
    \begin{split}
        &F_{\max \{|\hat{g}_1|,\ldots,|\hat{g}_N|\}}(r)\\ &=  \textcolor{black}{\idotsint\limits_{-\infty}^{\infty}}  \prod_{l=1}^{\epsilon\text{-rank}} \frac{1}{\pi} \exp{(-(a^2_l+b^2_l))}\\ 
        & \prod_{k=1}^{N} \left( 1-Q_1 \left(\frac{\sqrt{2\left(\sum \limits_{l=1}^{\epsilon\text{-rank}} \sqrt{s_l} u_{k,l} a_l\right)^2+ 2\left(\sum \limits_{l=1}^{\epsilon\text{-rank}} \sqrt{s_l} u_{k,l} b_l\right)^2}}{\sqrt{\sigma^2 -\sum \limits_{l=1}^{\epsilon\text{-rank}} s_l u_{k, l}^{2}} }, \frac{\sqrt{2} r}{ \sqrt{\sigma^2 -\sum \limits_{l=1}^{\epsilon\text{-rank}} s_l u_{k, l}^{2}} }\right) \right) \\ &da_1\ldots da_{\epsilon\text{-rank}} ~db_1\ldots db_{\epsilon\text{-rank}},
        \end{split}
        \label{cdfgfas}
    \end{align}
    \label{CDF_g_hat}
\end{theorem}
\noindent \textcolor{black}{where $Q_1(\cdot)$ is Marcum Q-function.}

\begin{proof}
\textcolor{black}{See Appendix \ref{appendixC}}.
\end{proof}

\noindent \textcolor{black}{\textbf{Probability of Outage:} We define the outage event as
\begin{equation}
    \left\{g_{\textcolor{black}{\mathrm{FAS}}}^{2} \Theta<\gamma_{\mathrm{th}}\right\}=\left\{g_{\textcolor{black}{\mathrm{FAS}}}<\sqrt{\frac{\gamma_{\mathrm{th}}}{\Theta}}\right\} = \left\{ \max\{ |g_1|,\ldots, |g_N|\}<\sqrt{\frac{\gamma_{\mathrm{th}}}{\Theta}}\right\}.
\end{equation}
Therefore, the probability of outage can be approximated by 
\begin{equation}
    P_{out}(\gamma_{\mathrm{th}}) \approx F_{\max \{|\hat{g}_1|,\ldots,|\hat{g}_N|\}}(\sqrt{\frac{\gamma_{\mathrm{th}}}{\Theta}}).
    \label{p_out1}
\end{equation}}

The results in (\ref{cdfgfas}) and (\ref{p_out1}) provide approximated expressions for the CDF and probability of outage of the FAS channel that depend on the parameter $\epsilon$. In what follows, we investigate the impact of this parameter on both the accuracy and the mathematical tractability of these expressions. 

\begin{prop}
Let $\epsilon > 0$ be big enough such that $\epsilon\text{-rank} < \frac{N}{2}$. Then, the number of the multi-fold integrals in the approximated CDF and outage probability of the FAS channel is reduced by $\frac{N-2~\epsilon\text{-rank}}{N}$.
\label{gain}
\end{prop}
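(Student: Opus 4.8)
The plan is to prove the claim by a direct dimension-counting argument, comparing the number of integration folds required by the exact representation against that of the first-stage approximation, and then reading off the fractional reduction. First I would fix the baseline. By Theorem \ref{theo1}, the exact representation of $\textbf{g}$ forces $M=N$, and as noted in the discussion preceding definition \ref{def:approx}, the resulting CDF of $g_{FAS}=\max\{|g_1|,\ldots,|g_N|\}$ can only be expressed as an $N$ multi-fold integral for $N>3$ (consistent with the general frameworks of \cite{Lfolds} and \cite{very_general}). This value $N$ is the benchmark against which the reduction is measured.

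Next I would count the folds appearing in the approximated expressions. By inspection of the integral in Theorem \ref{CDF_g_hat}, equation (\ref{cdfgfas}), and of the outage expression in Theorem \ref{p_out_approx1}, equation (\ref{p_out1}), the integration is performed over $da_1\ldots da_{\epsilon\text{-rank}}\,db_1\ldots db_{\epsilon\text{-rank}}$, that is, over exactly $2\,\epsilon\text{-rank}$ real variables. Hence both the approximated CDF and the approximated outage probability are $2\,\epsilon\text{-rank}$ multi-fold integrals, a count that is immediate and requires no computation.

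With both counts in hand, the absolute reduction in the number of folds is $N-2\,\epsilon\text{-rank}$, and dividing by the baseline $N$ yields the stated fractional reduction $\frac{N-2\,\epsilon\text{-rank}}{N}$. The role of the hypothesis $\epsilon\text{-rank}<\frac{N}{2}$ is precisely to guarantee $2\,\epsilon\text{-rank}<N$, so that $N-2\,\epsilon\text{-rank}>0$ and the quantity is a genuine (strictly positive) reduction rather than a null or negative one; this is exactly the regime in which retaining the random variables $x_1,\ldots,x_N,y_1,\ldots,y_N$ in the approximation pays off.

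The only point needing care, rather than routine arithmetic, is justifying the baseline count of $N$: the whole notion of ``reduction'' is meaningful solely relative to that benchmark, so I would lean on \cite{Lfolds} and \cite{very_general} to argue that the exact, full-rank model genuinely yields an irreducible $N$-fold integral. Once the baseline is pinned to $N$ and the approximation is read off as $2\,\epsilon\text{-rank}$ folds, the remaining step reducing to $\frac{N-2\,\epsilon\text{-rank}}{N}$ is trivial, and the sign condition follows directly from $\epsilon\text{-rank}<\frac{N}{2}$.
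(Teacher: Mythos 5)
Your proposal is correct and follows essentially the same argument as the paper: count the $2\,\epsilon\text{-rank}$ integration variables in the approximated CDF and outage expressions, compare against the $N$ multi-fold baseline for the exact selection-combiner CDF over correlated channels, and take the ratio. Your added remark on the role of the hypothesis $\epsilon\text{-rank}<\frac{N}{2}$ in ensuring the reduction is strictly positive is a small but welcome clarification that the paper leaves implicit.
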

\begin{proof} \textcolor{black}{Originally, the CDF of the selection combiner over correlated channels is represented by \textit{N}-fold integrals  \cite{N_multifold}. Therefore, the proof follows directly from the fact that (\ref{cdfgfas}) and (\ref{p_out1}) have $2 \times \epsilon\text{-rank}$ multi-fold integrals.}
\end{proof}

Proposition \ref{gain}  shows a reduction in the number of multi-fold integrals as $\epsilon$ increases (i.e. $\epsilon$-rank decreases), while \textcolor{black}{T}heorems \textcolor{black}{\ref{distance}} and \ref{convergence} show a higher accuracy of the approximation as $\epsilon$ decreases (i.e. $\epsilon$-rank increases). Therefore, we have to undergo the trade-off between tractability and accuracy. In practice (as the numerical results illustrate in \textcolor{black}{S}ection \ref{sec:results}), we can find an $\epsilon$ \textcolor{black}{small} enough to have a considerable reduction in the number of multi-fold integrals while guaranteeing a high accuracy of the approximation. However, we can see that finding a suitable $\epsilon$ is not enough to evaluate the expressions of the approximations. The approximated probability of outage and CDF do not depend explicitly on $\epsilon$, but instead on $\epsilon$-rank. Therefore, we need to count the number of eigenvalues of $\boldsymbol{\Sigma}_g$ exceeding $\epsilon$, which numerically is a straightforward task. However, determining $\epsilon$-rank as a function of the problem parameters (\textit{i.e.} $N$ and $W$) is more insightful and provides stand-alone expressions of the FAS channel distribution and probability of outage. In what follows, we investigate the eigenvalue distribution function of the covariance matrix $\boldsymbol{\Sigma}_g$ to approximate the number of its eigenvalues exceeding a certain threshold. 

\begin{theorem}
Consider the matrix $\textbf{T}_N$ of size $N \times N$ such that 
$$
(\textbf{T}_N)_{(k,l)} = J_0 \left( 2 \pi (k-\ell) c \right), \text{ for } k,\ell \in \{1,\ldots,N\} \text{ and } 0 < c < \frac{1}{2}.
$$ 
Let $\{ s_{N,k}; k=1,\ldots,N\}$ be the set of its eigenvalues. Consider the eigenvalue distribution function of $\textbf{T}_N$ defined as $D_{N}(x)=\left(\text { number of } s_{N, k} \leq x\right) / \mathrm{N}$, and its limiting distribution $D(x)=\lim _{N \rightarrow \infty} D_{N}(x)$. It should hold that

\begin{equation}
    D(x) = \left \{ 
    \begin{array}{ll}
         1 - 2c & \text{ if } 0 < x < \frac{\sigma^2}{\pi c} \\
         1 - 2c + \sqrt{ (2 c)^2 - \frac{4 \sigma^4}{(\pi x)^2}} & \text{ if } x \geq \frac{\sigma^2}{\pi c},
    \end{array}
    \right.
\end{equation}
\label{theo:epsilonrank}
\end{theorem}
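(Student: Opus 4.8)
The plan is to recognize that $\textbf{T}_N$ is a real symmetric Toeplitz matrix: its $(k,\ell)$ entry depends only on $k-\ell$, and $J_0$ is even, so the limiting spectral distribution is governed by its \emph{symbol}. Writing the Toeplitz coefficients as $t_n=\sigma^2 J_0(2\pi n c)$ (the factor $\sigma^2$ reflects that the matrix of interest is $\boldsymbol{\Sigma}_g=\sigma^2\textbf{T}_N$), the symbol is the real, even function $f(\theta)=\sum_{n\in\mathbb{Z}} t_n e^{\mathrm{i} n\theta}$ on $(-\pi,\pi]$. By the Szeg\H{o}--Avram--Parter theorem, the limiting eigenvalue distribution function $D$ is exactly the distribution function of $f(\Theta)$ with $\Theta$ uniform on $(-\pi,\pi)$, i.e. $D(x)=\frac{1}{2\pi}\,\mathrm{meas}\{\theta\in(-\pi,\pi): f(\theta)\le x\}$. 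Thus the whole statement reduces to (i) computing $f$ in closed form and (ii) computing the measure of its sublevel sets.

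For step (i) I would evaluate $f$ through the Fourier transform of the Bessel function. With $g(t)=\sigma^2 J_0(2\pi c t)$ one has, from $\int_0^\infty J_0(at)\cos(bt)\,dt=(a^2-b^2)^{-1/2}$ for $0<b<a$ and $0$ for $b>a$, the transform $\hat g(\xi)=\frac{\sigma^2}{\pi\sqrt{c^2-\xi^2}}\,\mathbf{1}_{\{|\xi|<c\}}$. Since $t_n=g(n)$, Poisson summation gives $f(\theta)=\sum_{k\in\mathbb{Z}}\hat g\!\left(\frac{\theta}{2\pi}+k\right)$. Here the hypothesis $0<c<\tfrac12$ is decisive: $\operatorname{supp}\hat g\subset(-c,c)\subset(-\tfrac12,\tfrac12)$, so for $\theta\in(-\pi,\pi)$ only the $k=0$ term is nonzero (no aliasing), yielding the compactly supported symbol $f(\theta)=\frac{2\sigma^2}{\sqrt{4\pi^2c^2-\theta^2}}$ for $|\theta|<2\pi c$ and $f(\theta)=0$ for $2\pi c<|\theta|<\pi$.

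For step (ii) I would read off the sublevel-set measure directly. On $|\theta|>2\pi c$, of total length $2\pi(1-2c)$, the symbol vanishes, so $f(\theta)\le x$ holds there for every $x>0$; this already contributes the floor $1-2c$. On $|\theta|<2\pi c$, $f$ is even, strictly increasing in $|\theta|$, with minimum $f(0)=\sigma^2/(\pi c)$ and blowing up as $|\theta|\uparrow 2\pi c$. Hence for $0<x<\sigma^2/(\pi c)$ no extra $\theta$ qualifies and $D(x)=1-2c$; for $x\ge\sigma^2/(\pi c)$, solving $f(\theta)\le x$ gives $|\theta|\le\sqrt{4\pi^2c^2-4\sigma^4/x^2}$, an interval of length $2\sqrt{4\pi^2c^2-4\sigma^4/x^2}$, and dividing by $2\pi$ adds $\frac1\pi\sqrt{4\pi^2c^2-4\sigma^4/x^2}=\sqrt{(2c)^2-4\sigma^4/(\pi x)^2}$, which is exactly the claimed second branch.

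The main obstacle is analytic rather than computational: the symbol $f$ is \emph{unbounded}, so the elementary form of Szeg\H{o}'s theorem for bounded symbols does not apply directly. I would therefore invoke the Avram--Parter extension valid for $L^1$ symbols, after checking $f\in L^1(-\pi,\pi)$ --- the only singularities are the integrable ones of type $(2\pi c-|\theta|)^{-1/2}$ at $\theta=\pm 2\pi c$, so $\int_{-\pi}^{\pi}|f|\,d\theta<\infty$. The remaining care points are justifying the $J_0$ Fourier-transform identity as an improper (Abel-summable) integral and the validity of Poisson summation for the sampled sequence $\{g(n)\}$; both are standard once $f\in L^1$ and the support condition $c<\tfrac12$ remove aliasing.
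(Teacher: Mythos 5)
Your proposal is correct and follows essentially the same route as the paper's Appendix E: both identify the symbol of the Toeplitz matrix via Poisson summation (using $0<c<\tfrac12$ to kill all aliasing terms), arrive at $f(\theta)=\frac{2\sigma^{2}}{\sqrt{(2\pi c)^{2}-\theta^{2}}}\mathbf{1}_{\{|\theta|<2\pi c\}}$, and then read off $D(x)$ as the normalized measure of the sublevel sets via a Szeg\H{o}-type theorem. If anything, your version is slightly more careful than the paper's, since you explicitly flag that the symbol is unbounded and justify appealing to the Avram--Parter extension for $L^{1}$ symbols.
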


\begin{proof}
    See Appendix \ref{AppendixD}.
\end{proof} 

We can see from \textcolor{black}{T}heorem \ref{theo:epsilonrank} that for a small enough threshold $x$, the fraction of eigenvalues of $\textbf{T}_N$ less than $x$ (as $N$ goes to infinity) become independent of $x$, and always equal to $1-2c$. This result allows us to approximate the $\epsilon$-rank of the covariance matrix $\boldsymbol{\Sigma}_g$. In fact, we can see that for $c = \frac{W}{N-1}$, $\boldsymbol{\Sigma}_g = \textbf{T}_N$. Therefore, for a small $ 0 < \epsilon < \frac{\sigma^2}{\pi c}$ and a large $N$, the fraction of eigenvalues of $\boldsymbol{\Sigma}_g$ less than $\epsilon$ can be approximated by $1-\frac{2W}{N-1}$. Therefore, the fraction of eigenvalues exceeding $\epsilon$ can be approximated by $\frac{2W}{N-1}$. Thus,
\begin{equation}
    \epsilon\text{-rank} \approx 2W \frac{N}{N-1}, \text{ for large } N \text{ and } 0 < \epsilon < \frac{\sigma^2 (N-1)}{\pi W}.
    \label{epsilon-rank-approx0}
\end{equation}

\noindent The approximation above is more accurate as $N$ goes to infinity. We observe through simulation that the asymptotic convergence is slow, and the approximation of $\epsilon$-rank becomes accurate for very large $N$ beyond the range we test for FAS. Therefore, we propose to approximate $\epsilon$-rank as follows
\begin{equation}
    \epsilon\text{-rank} \approx a ~W \frac{N}{N-1},\text{ for large } N \text{ and } 0 < \epsilon < \frac{\sigma^2 (N-1)}{\pi W}. 
    \label{epsilon-rank-approx}
\end{equation}
\noindent where $a$ is a constant that we determine in the numerical results section to approximate $\epsilon\text{-rank}$ better for the tested range of $N$. 

Even though the practical values of $\epsilon$-rank are quite small compared to $\frac{N}{2}$, multi-fold integral representation of the CDF and probability of outage still restricts us from gaining insights about the FAS. Therefore, we design a second stage approximation of the channel vector to further simplify the CDF expression. \\

\subsubsection{Second Stage Approximation} \hfill \\
In the first stage approximation, we use the random vector $\hat{\textbf{g}} = (\hat{g}_1, \ldots, \hat{g}_N)^{\textcolor{black}{\mathrm{T}}}$ to approximate the distribution of the channel vector $\textbf{g} = (g_1, \ldots, g_N)^{\textcolor{black}{\mathrm{T}}}$. Then, we approximate the FAS channel distribution $g_{\textcolor{black}{\mathrm{FAS}}} = \max \{ |g_1|, \ldots, |g_N| \}$ by the distribution of $\max \{ |\hat{g}_1|, \ldots, |\hat{g}_N| \}$. The second stage approximation aims to further approximate the distribution of $\max \{ |\hat{g}_1|, \ldots, |\hat{g}_N| \}$. However, it starts by approximating the joint distribution of a set of independent \textcolor{black}{random vectors, each having the same distribution as} $\hat{\textbf{g}} = (\hat{g}_1, \hat{g}_2,\ldots, \hat{g}_N)^{\textcolor{black}{\mathrm{T}}}$, then uses it to retrieve an approximation of the CDF of  $\max \{ |\hat{g}_1|, \ldots, |\hat{g}_N| \}$.

\textcolor{black}{In more detail, we consider a random matrix $\hat{\textbf{G}}$ with independent columns, each having the same distribution as $\hat{\textbf{g}} = (\hat{g}_1, \hat{g}_2,\ldots, \hat{g}_N)^{\textcolor{black}{\mathrm{T}}}$. Then, we design a random matrix $\Tilde{\textbf{G}}$ such that its joint distribution approximates the joint distribution of $\hat{\textbf{G}}$.} Finally, by exploiting how $\hat{\textbf{G}}$ and $\hat{\textbf{g}}$ are related, we obtain an approximation of the distribution of $\max \{ |\hat{g}_1|, \ldots, |\hat{g}_N|\}$ from the distribution of $\Tilde{\textbf{G}}$ that we carefully design to avoid a multi-fold integral representation of the CDF expression.

\begin{definition}
    \textcolor{black}{Let $R \in \{1,\ldots, N\}$, $\epsilon >0$ such that $\epsilon\text{-rank} < N$. We define the random matrices $\hat{\textbf{G}}$ and $\Tilde{\textbf{G}}$ of size $N \times R$ such that their $(k,r)$ entries are respectively written as 
    \begin{align}
        \hat{g}_{k,r} &= \sqrt{ \sigma^2 -\sum_{l=1}^{\epsilon\text{-rank}} s_l u_{k, l}^{2}}~~(x_{k,r}+\textcolor{black}{\mathrm{j}} y_{k,r})+ \sum_{l=1}^{\epsilon\text{-rank}} \sqrt{s_l} u_{k,l} (a_{l,r}+\textcolor{black}{\mathrm{j}} b_{l,r}), \label{def:g_hat} \\
        \Tilde{g}_{k,r} &= \sqrt{ \sigma^2 -\sum_{l=1}^{\epsilon\text{-rank}} s_l u_{k, l}^{2}}~~(x_{k,r}+ \textcolor{black}{\mathrm{j}} y_{k,r})+ \sum_{l=1}^{\epsilon\text{-rank}} \sqrt{s_l} u_{k,l} (a_{l,k}+\textcolor{black}{\mathrm{j}} b_{l,k}), \label{def:g_tilde}
    \end{align}
    \noindent where $\{x_{k,r}, y_{k,r}, a_{l,p}, b_{l,p}, 1 \leq r \leq R, 1 \leq k,p \leq N ,1 \leq l \leq \epsilon\text{-rank} \}$ is a set of i.i.d normal random variables with zero-mean and variance $\frac{1}{2}$.}
\end{definition}

\begin{remark}
 \textcolor{black}{We can see that the random matrix $\hat{\textbf{G}}$ has dependent rows and independent columns, with each column having the same distribution as $\hat{\textbf{g}} = (\hat{g}_1,\ldots, \hat{g}_N)^{\textcolor{black}{\mathrm{T}}}$. On the other hand, $\Tilde{\textbf{G}}$ has independent rows and dependent columns, with each column having the same distribution as $\Tilde{\textbf{g}} = (\Tilde{g}_1, \ldots, \Tilde{g}_N)^{\textcolor{black}{\mathrm{T}}}$ such that
 \begin{equation}
    \Tilde{g}_{k} = \sqrt{ \sigma^2 -\sum_{l=1}^{\epsilon\text{-rank}} s_l u_{k, l}^{2}}~~(x_{k}+\textcolor{black}{\mathrm{j}} y_{k})+ \sum_{l=1}^{\epsilon\text{-rank}} \sqrt{s_l} u_{k,l} (a_{l,k}+\textcolor{black}{\mathrm{j}} b_{l,k}), ~ 1 \leq k \leq N.
 \end{equation}}
\label{indep}
\end{remark}

\begin{prop} 
    \textcolor{black}{If we arrange the elements of $\hat{\textbf{G}}$ and $\Tilde{\textbf{G}}$ as $(\hat{g}_{1,1},\dots, \hat{g}_{N,1},\ldots, \hat{g}_{1,R},\dots, \hat{g}_{N,R})^{\textcolor{black}{\mathrm{T}}}$ and $(\Tilde{g}_{1,1},\dots, \Tilde{g}_{1,R},\ldots, \Tilde{g}_{N,1},\dots, \Tilde{g}_{N,R})^{\textcolor{black}{\mathrm{T}}}$ respectively, then their mean vectors $\boldsymbol{\mu}_{\hat{G}}$ and $\boldsymbol{\mu}_{\Tilde{G}}$ and covariance matrices $\boldsymbol{\Sigma}_{\hat{G}}$ and $\boldsymbol{\Sigma}_{\Tilde{G}}$ can be respectively written as
    \begin{align}
        &\boldsymbol{\mu}_{\hat{G}}    = (0, 0, \ldots, 0)^{\textcolor{black}{\mathrm{T}}}, &&\boldsymbol{\mu}_{\Tilde{G}} = (0, 0, \ldots, 0)^{\textcolor{black}{\mathrm{T}}}, \\
        &\boldsymbol{\Sigma}_{\hat{G}} = 
        \begin{pmatrix}
        \boldsymbol{\Sigma}_{\hat{g}} & 0 & \cdots & 0 \\
        0 & \boldsymbol{\Sigma}_{\hat{g}} & \cdots & 0 \\
        \vdots  & \vdots  & \ddots & \vdots  \\
        0 & 0 & \cdots & \boldsymbol{\Sigma}_{\hat{g}}
    \end{pmatrix}, && \boldsymbol{\Sigma}_{\Tilde{G}}= 
    \begin{pmatrix}
    \boldsymbol{\Sigma}_{1} & 0 & \cdots & 0 \\
    0 & \boldsymbol{\Sigma}_{2} & \cdots & 0 \\
    \vdots  & \vdots  & \ddots & \vdots  \\
    0 & 0 & \cdots & \boldsymbol{\Sigma}_{N}
    \end{pmatrix}.
    \end{align}
    where $\boldsymbol{\Sigma}_{\hat{g}}$ is the covariance matrix of $\hat{\textbf{g}}$, and $\boldsymbol{\Sigma}_{k}$ for $1 \leq k \leq N$ is the $R \times R$ matrix defined as
\begin{equation}
    (\boldsymbol{\Sigma}_{k})_{m,n} = \left\{
    \begin{array}{ll}
        \sigma^2 & \mbox{if } m=n \\
        \sum_{l=1}^{\epsilon\text{-rank}} s_l u_{k,l}^2 & \mbox{if } m \neq n
    \end{array}
\right. \text{for } m,n \in \{1,\ldots,R\}.
\end{equation}}
    \label{prop:covG_hat_tilde}
    \end{prop}
\begin{proof}
See Appendix \ref{appendixE}.
\end{proof}

\textcolor{black}{Now that $\hat{\textbf{G}}$ and $\tilde{\textbf{G}}$ are defined and their distributions are determined (\textit{i.e.} mean vectors and covariance matrices), we investigate the distributions of the maximum magnitude of their elements and their relationship with the distribution of $\max \{ |\hat{g}_1|, \ldots |\hat{g}_N|\}$ in the next theorem.}

\begin{theorem}
\textcolor{black}{Let $\Omega_R$ and $\Psi_R$ be the maximum magnitudes of the elements of $\hat{\textbf{G}}$ and $\Tilde{\textbf{G}}$ respectively defined as
\begin{align}
    \Omega_R &= \max\{ |\hat{g}_{k,r}|,~ 1\leq k \leq N, ~ ~ 1\leq r \leq R \},\\
    \Psi_R &= \max\{ |\Tilde{g}_{k,r}|,~ 1\leq k \leq N, ~ ~ 1\leq r \leq R \}.
\end{align}
Then, the CDFs of $\Omega_R$ and $\Psi_R$ can respectively be written as 
\begin{align}
     &F_{\Omega_R}(g) = \left( F_{\max \{|\hat{g}_1|, \ldots,|\hat{g}_N|\} }(g) \right)^R, \label{theo:omega}\\
     \begin{split}
     &F_{\Psi_R}(g) = \\
     &\prod_{k=1}^N \int_{0}^{+\infty} \frac{1}{\sum\limits_{l=1}^{\epsilon\text{-rank}} s_l u_{k, l}^{2}} \exp \left({-\frac{r}{\sum\limits_{l=1}^{\epsilon\text{-rank}} s_l u_{k, l}^{2}}}\right)
         \left( 1-Q_1 \left(\frac{\sqrt{2r}}{\sqrt{\sigma^2 -\sum \limits_{l=1}^{\epsilon\text{-rank}} s_l u_{k, l}^{2}} }, \frac{\sqrt{2}g}{\sqrt{\sigma^2 -\sum \limits_{l=1}^{\epsilon\text{-rank}} s_l u_{k, l}^{2}} }\right) \right)^R dr.
     \end{split} \label{theo:Psi}
\end{align}}
\end{theorem}

\begin{proof}
\textcolor{black}{See Appendix \ref{appendixF}}.
\end{proof}

We design the random matrix $\Tilde{\textbf{G}}$ such that $F_{\max\{ |\Tilde{g}_{k,r}|,~ 1\leq k \leq N,~ 1\leq r \leq R \}}$ is a product of single integrals, as it is shown in (\ref{theo:Psi}). Now, if we choose $R$ such that the distribution of $\Tilde{\textbf{G}}$ approximates the distribution of $\hat{\textbf{G}}$ in a certain sense, then the distribution of $\max\{ |\Tilde{g}_{k,r}|,~ 1\leq k \leq N,~ 1\leq r \leq R \}$ will also approximate the distribution of $\max\{ |\hat{g}_{k,r}|,~ 1\leq k \leq N,~ 1\leq r \leq R \}$, resulting in a single-integral approximation of the CDF of $\max\{ |\hat{g}_{k,r}|,~ 1\leq k \leq N,~ 1\leq r \leq R \}$. On the other hand, (\ref{theo:omega}) shows that $F_{\max \{|\hat{g}_1|, \ldots,|\hat{g}_N|\}} = F^{\frac{1}{R}}_{\max\{ |\hat{g}_{k,r}|,~ 1\leq k \leq N,~ 1\leq r \leq R \}}$. Therefore, the latter approximation allows us to write CDF of $\max \{|\hat{g}_1|, \ldots,|\hat{g}_N|\}$ as a power of single integrals instead of the initial multi-fold integrals. In the following, we investigate the choice of $R$ that allows to approximate $\hat{\textbf{G}}$ by $\Tilde{\textbf{G}}$ in the sense of a distance that we define.

\begin{definition}
    Let $\mathcal{CN}_0$ denote the family of multivariate circularly-symmetric complex normal distributions with zero-mean. We define a distance between two distributions $F$ and $G$ in $\mathcal{CN}_0$ with respective covariance matrices $\boldsymbol{\Sigma}_F$ and $\boldsymbol{\Sigma}_G$ as
    \begin{equation}
        d(F,G) = \norm{\boldsymbol{\Sigma}_F - \boldsymbol{\Sigma}_G}_1, \label{eq:distance_func}
    \end{equation}
   where $\norm{.}_1$ is the induced matrix norm 1 ( i.e. the maximum absolute column sum of the matrix).
\end{definition}

In the following, we denote the covariance matrices of $\hat{\textbf{G}}$ and $\Tilde{\textbf{G}}$ given in \textcolor{black}{P}ropositions \ref{prop:covG_hat_tilde}, by $\boldsymbol{\Sigma}_{\hat{G}}(R)$ and $\boldsymbol{\Sigma}_{\Tilde{G}}(R)$ to emphasize their dependence on $R$, and we use the well-defined distance measure above to approximate $\hat{\textbf{G}}$ by $\Tilde{\textbf{G}}$.

\begin{definition}
    Let $\epsilon >0$ such that $\epsilon\text{-rank} < N$. Consider the random matrices $\hat{\textbf{G}}$ and $\Tilde{\textbf{G}}$ with covariance matrices $\boldsymbol{\Sigma}_{\hat{G}}(R)$ and $\boldsymbol{\Sigma}_{\Tilde{G}}(R)$, respectively. We define $R_1^*$ as the solution to the following optimization problem
    \begin{equation}
        \operatornamewithlimits{min}\limits_{R \in \{1,\ldots, N\} } \norm{\boldsymbol{\Sigma}_{\hat{G}}(R) - \boldsymbol{\Sigma}_{\Tilde{G}}(R)}_1 
        \tag{P1}
        \label{P1}
    \end{equation}
    \end{definition}

We can see that $R_1^*$ minimizes the distance in (\ref{eq:distance_func})  between $\hat{\textbf{G}}$ and $\Tilde{\textbf{G}}$. Therefore, it makes sense to assume that the distribution of $\Tilde{\textbf{G}}$ approximates the distribution of $\hat{\textbf{G}}$ for $R=R_1^*$. However, solving the above optimization problem is not straight forward. Therefore, we relax (\ref{P1}) by (\ref{P2}) defined in the following, where the feasible set $\{1,\ldots,N\}$ becomes restricted to the set of divisors of $N$. 

\begin{definition}
   Let $\epsilon >0$ such that $\epsilon\text{-rank} < N$. Consider the random matrices $\hat{\textbf{G}}$ and $\Tilde{\textbf{G}}$ with covariance matrices $\boldsymbol{\Sigma}_{\hat{G}}(R)$ and $\boldsymbol{\Sigma}_{\Tilde{G}}(R)$, respectively. We define $R_2^*$ as the solution to the following relaxed optimization problem
    \begin{equation}
        \operatornamewithlimits{min}\limits_{R \text{ divisor of } N } \norm{\boldsymbol{\Sigma}_{\hat{G}}(R) - \boldsymbol{\Sigma}_{\Tilde{G}}(R)}_1 
        \tag{P2}
        \label{P2}
    \end{equation}
    \end{definition}
    
Relaxing the optimization problem (\ref{P1}) simplifies the feasible set. Nevertheless, the objective function is still not straight forward to minimize even on the relaxed feasible set. Therefore, we simplify the objective function in an approximated optimization problem (\ref{P3}) defined in the following.

\begin{definition}
    Consider the covariance matrix $\boldsymbol{\Sigma}_g$ of the channel vector $\textbf{g} = (g_1,\ldots,g_N)^{\textcolor{black}{\mathrm{T}}}$. We define $R_3^*$ as the solution to the following optimization problem 
    \begin{equation}
        \operatornamewithlimits{min}\limits_{R \text{ divisor } N } \norm{\boldsymbol{\Sigma}_{G}(R) - \sigma^2 \boldsymbol{\mathbb{I}}(R)}_1
        \tag{P3}
        \label{P3}
    \end{equation}
    where $\boldsymbol{\Sigma}_{G}(R)$ and $\boldsymbol{\mathbb{I}}(R)$ are the $N R \times N R$ matrices defined as
    \begin{align}
        \boldsymbol{\Sigma}_{G} (R) &= 
        \begin{pmatrix}
        \boldsymbol{\Sigma}_{{g}} & 0 & \cdots & 0 \\
        0 & \boldsymbol{\Sigma}_{{g}} & \cdots & 0 \\
        \vdots  & \vdots  & \ddots & \vdots  \\
        0 & 0 & \cdots & \boldsymbol{\Sigma}_{{g}}
        \end{pmatrix} & 
        \boldsymbol{\mathbb{I}}(R) = 
        \begin{pmatrix}
        \boldsymbol{\mathbf{1}}_{R \times R} & 0 & \cdots & 0 \\
        0 & \boldsymbol{\mathbf{1}}_{R \times R} & \cdots & 0 \\
        \vdots  & \vdots  & \ddots & \vdots  \\
        0 & 0 & \cdots & \boldsymbol{\mathbf{1}}_{R \times R}
    \end{pmatrix}
    \end{align}
    with $\boldsymbol{\mathbf{1}}_{R \times R}$ is the $R \times R$ constant matrix with all entries equal to 1.
\end{definition}

Next, we investigate the accuracy of approximating the objective function in (\ref{P1}) by the objective function in (\ref{P3}) in the following theorem. 
\begin{theorem}
    Let $\epsilon >0$ such that $\epsilon\text{-rank} < N$. Consider the random matrices $\hat{\textbf{G}}$ and $\Tilde{\textbf{G}}$ with covariance matrices $\boldsymbol{\Sigma}_{\hat{G}}(R)$ and $\boldsymbol{\Sigma}_{\Tilde{G}}(R)$, respectively.
    For $\epsilon = \frac{\epsilon '}{2N}$, $\epsilon' >0$, we have
    \begin{equation}
        \left| \norm{\boldsymbol{\Sigma}_{\hat{G}}(R) - \boldsymbol{\Sigma}_{\Tilde{G}}(R)}_1 - \norm{\boldsymbol{\Sigma}_{G}(R) - \sigma^2 \boldsymbol{\mathbb{I}}(R)}_1  \right| < \epsilon' .
    \end{equation}
\end{theorem}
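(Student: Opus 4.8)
The plan is to express all four covariance matrices entrywise in a single common ordering and reduce the induced $1$-norm (maximum absolute column sum) to a scalar comparison. First I would index the $NR$ coordinates by pairs $(k,r)$ with $k\in\{1,\dots,N\}$ and $r\in\{1,\dots,R\}$. Since the induced $1$-norm is invariant under simultaneous permutation of rows and columns (conjugation by a permutation matrix merely reorders the column sums, so their maximum is unchanged), I may evaluate both $\norm{\boldsymbol{\Sigma}_{\hat G}(R)-\boldsymbol{\Sigma}_{\Tilde G}(R)}_1$ and $\norm{\boldsymbol{\Sigma}_{G}(R)-\sigma^2\boldsymbol{\mathbb{I}}(R)}_1$ with respect to this one ordering, irrespective of the column-major and row-major block layouts used in Propositions \ref{prop:covGhat} and \ref{prop:covGtilde}. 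Writing $\beta_k:=\sum_{l=1}^{\epsilon\text{-rank}}s_l u_{k,l}^2$ and using the eigendecomposition $(\boldsymbol{\Sigma}_g)_{k,k'}=\sum_{l=1}^N s_l u_{k,l}u_{k',l}$, the only nonzero entries of $D_1:=\boldsymbol{\Sigma}_{\hat G}-\boldsymbol{\Sigma}_{\Tilde G}$ in column $(k',r')$ are $-\beta_{k'}$ in the $R-1$ rows $(k',r)$ with $r\neq r'$, and $\sum_{l=1}^{\epsilon\text{-rank}} s_l u_{k,l}u_{k',l}$ in the $N-1$ rows $(k,r')$ with $k\neq k'$. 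In the very same positions, the nonzero entries of $D_2:=\boldsymbol{\Sigma}_G-\sigma^2\boldsymbol{\mathbb{I}}$ are $-\sigma^2$ and $(\boldsymbol{\Sigma}_g)_{k,k'}$, respectively.

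Next I would form the column sums, which depend only on $k'$:
\begin{align}
    C_1(k') &= (R-1)\beta_{k'} + \sum_{k\neq k'}\Big|\sum_{l=1}^{\epsilon\text{-rank}} s_l u_{k,l}u_{k',l}\Big|, \\
    C_2(k') &= (R-1)\sigma^2 + \sum_{k\neq k'}\big|(\boldsymbol{\Sigma}_g)_{k,k'}\big|.
\end{align}
Thus $\norm{D_1}_1=\max_{k'}C_1(k')$ and $\norm{D_2}_1=\max_{k'}C_2(k')$, and the elementary inequality $|\max_{k'}C_1(k')-\max_{k'}C_2(k')|\le\max_{k'}|C_1(k')-C_2(k')|$ reduces the claim to bounding $|C_1(k')-C_2(k')|$ for each fixed $k'$.

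Finally I would isolate the tail eigenvalues. Because the diagonal of $\boldsymbol{\Sigma}_g$ equals $\sigma^2$, we have $\sigma^2-\beta_{k'}=\sum_{l=\epsilon\text{-rank}+1}^N s_l u_{k',l}^2$, while by the reverse triangle inequality $\big|\,|\sum_{l\le\epsilon\text{-rank}} s_l u_{k,l}u_{k',l}|-|(\boldsymbol{\Sigma}_g)_{k,k'}|\,\big|\le \big|\sum_{l>\epsilon\text{-rank}} s_l u_{k,l}u_{k',l}\big|$; in both expressions only eigenvalues below the threshold occur, each satisfying $s_l\le\epsilon$. Using orthonormality of the rows of the eigenvector matrix ($\sum_l u_{k,l}^2=1$) for the diagonal term and Cauchy--Schwarz ($\sum_l|u_{k,l}||u_{k',l}|\le 1$) for each of the $N-1$ off-diagonal terms yields
\begin{equation}
    |C_1(k')-C_2(k')| \le (R-1)\epsilon + (N-1)\epsilon = (R+N-2)\epsilon.
\end{equation}
Since $R\le N$, we have $R+N-2\le 2N-2<2N$, so with $\epsilon=\epsilon'/(2N)$ the bound becomes $(R+N-2)\epsilon<2N\epsilon=\epsilon'$, which establishes the theorem. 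The main obstacle is the bookkeeping in the first step: one must reconcile the two mismatched block orderings via permutation-invariance of the norm and correctly determine which entries survive in each difference matrix and how many times they occur; once the column sums are written explicitly, the remaining estimates are routine applications of the triangle inequality, Cauchy--Schwarz, and the threshold $s_l\le\epsilon$.
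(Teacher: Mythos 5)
Your proof is correct, and its skeleton matches the paper's: both arguments reduce the claim to the tail eigenvalues $s_l\le\epsilon$ for $l>\epsilon\text{-rank}$, split the discrepancy into the part coming from the replicated-row correlations (your $(R-1)$ term, the paper's $\norm{\sigma^2\boldsymbol{\mathbb{I}}(R)-\boldsymbol{\Sigma}_{\Tilde{G}}(R)}_1$ bound) and the part coming from the truncated port correlations (your $(N-1)$ term, the paper's $\norm{\boldsymbol{\Sigma}_{\hat{G}}(R)-\boldsymbol{\Sigma}_{G}(R)}_1$ bound), and bound each by roughly $N\epsilon$. The execution differs in two worthwhile ways. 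First, you work entrywise in a single $(k,r)$ indexing and invoke permutation-invariance of the induced $1$-norm; this makes explicit a bookkeeping step the paper glosses over, since $\boldsymbol{\Sigma}_{\hat{G}}$ and $\boldsymbol{\Sigma}_{G}$ are laid out column-major while $\boldsymbol{\Sigma}_{\Tilde{G}}$ and $\boldsymbol{\mathbb{I}}$ are laid out row-major, so the matrix differences appearing in the paper's triangle-inequality step only make sense after such a reconciliation. Second, for the off-diagonal discrepancy you apply Cauchy--Schwarz to the orthonormal rows of the eigenvector matrix, where the paper instead chains $\norm{\cdot}_1\le N\norm{\cdot}_{\max}\le N\norm{\cdot}_2\le N s_{1+\epsilon\text{-rank}}$. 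Your route is more elementary and yields the slightly sharper bound $(R+N-2)\epsilon<2N\epsilon=\epsilon'$, giving the strict inequality in the statement, whereas the paper's two $\frac{\epsilon'}{2}$ bounds only combine to $\le\epsilon'$. The one hypothesis you use that is not in the theorem statement, $R\le N$, is also used (and equally unstated) in the paper, and holds on all the feasible sets considered.
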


\begin{proof}
    See Appendix \ref{appendixG}.
\end{proof}

We can see that taking $\epsilon = \frac{\epsilon '}{2N}$ for a small enough $\epsilon' >0$ ensures that the objective functions in (\ref{P2}) and (\ref{P3}) are very comparable. This justifies approximating the solution of (\ref{P2}) by the solution of (\ref{P3}).

\begin{theorem}
The solution to the optimization problem (\ref{P3}) is given by  $R_3^*$, the greatest divisor of $N$, verifying
\begin{equation}
    J_{0}\left(\frac{2 \pi(R_3^*-1)}{N-1} W\right) \leq 0.5.
    \label{eq47}
\end{equation}
\label{theo11}
\end{theorem}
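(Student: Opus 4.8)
The plan is to reduce the matrix optimization (\ref{P3}) to a one-dimensional problem in the block size $R$ and then to locate its minimizer over the divisors of $N$ from the shape of the Bessel profile $d \mapsto J_0\!\left(\tfrac{2\pi d}{N-1}W\right)$. First I would expand $\boldsymbol{\Sigma}_G(R) - \sigma^2 \boldsymbol{\mathbb{I}}(R)$ entrywise. Both terms carry the same diagonal $\sigma^2$, so it cancels; viewing $\sigma^2 \boldsymbol{\mathbb{I}}(R)$ as grouping the $N$ ports into $N/R$ consecutive blocks of size $R$ and replacing every within-block correlation by $\sigma^2$ and every cross-block correlation by $0$, the residual entry for a port pair $(k,\ell)$ is $\sigma^2\!\left(J_0(\cdot)-1\right)$ when $k,\ell$ lie in a common block and $\sigma^2 J_0(\cdot)$ otherwise. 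Because $\norm{\cdot}_1$ is the maximum absolute column sum, this collapses $f(R)\delequal \norm{\boldsymbol{\Sigma}_G(R)-\sigma^2\boldsymbol{\mathbb{I}}(R)}_1$ to
\begin{equation}
f(R)=\sigma^2\max_{k}\left[\sum_{\ell\in B(k),\,\ell\neq k}\left(1-J_0\!\left(\tfrac{2\pi(k-\ell)}{N-1}W\right)\right)+\sum_{\ell\notin B(k)}\left|J_0\!\left(\tfrac{2\pi(k-\ell)}{N-1}W\right)\right|\right],
\end{equation}
where $B(k)$ is the size-$R$ block containing port $k$.

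Next I would identify the maximizing column. Using the Toeplitz structure of $\boldsymbol{\Sigma}_g$, the fact that $J_0$ is strictly decreasing from $J_0(0)=1$ on $[0,j_{0,1})$ (its first zero $j_{0,1}\approx 2.405$), and $|J_0|\le 1$ throughout, I expect the extremal column to sit at a fixed, $R$-independent position (a block boundary), which turns $f(R)$ into a closed scalar expression. The crux of the argument is then the behaviour of $f$ as $R$ ranges over the divisors of $N$: I would study the sign of the successive difference $f(R')-f(R)$ for consecutive admissible block sizes $R<R'$. Enlarging a block moves certain pairs from the cross-block regime, where they contribute $|J_0(\cdot)|$, into the within-block regime, where they contribute $1-J_0(\cdot)$; for a pair at lag $d$ with $J_0\ge 0$ the net change per such pair is exactly $\left(1-J_0(\cdot)\right)-J_0(\cdot)=1-2J_0\!\left(\tfrac{2\pi d}{N-1}W\right)$.

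This identity is the heart of the threshold $0.5$: the marginal cost of merging is negative precisely when the newly co-grouped correlations exceed $\tfrac12$ and becomes nonnegative as soon as they fall to $\tfrac12$. Since the largest internal lag of a size-$R$ block is $R-1$, the marginal gain of enlarging the block changes sign exactly when $J_0\!\left(\tfrac{2\pi(R-1)}{N-1}W\right)$ crosses $0.5$. Consequently $f$ improves along the divisors while this extremal within-block correlation stays above $\tfrac12$ and stops improving once it drops to $\tfrac12$, so the minimizer is pinned at the divisor realizing this crossover, which is precisely the characterization (\ref{eq47}). Formally I would combine the closed form of $f(R)$ with the monotonicity of $J_0$ near the origin to show that the discrete objective is unimodal along the lattice of divisors, so that the crossover point is the global, not merely local, minimizer; the uniqueness of the solution $\theta^{*}\approx 1.52$ of $J_0(\theta)=\tfrac12$ on $[0,j_{0,1})$ makes this crossover well defined.

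The main obstacle I anticipate is twofold. First, rigorously pinning the maximizing column of the induced $1$-norm is delicate: the cross-block sum $\sum_{\ell\notin B(k)}|J_0(\cdot)|$ is not monotone in the position of $k$ because $J_0$ oscillates and changes sign at lags beyond $j_{0,1}$, so the competition between a boundary column (which accumulates large same-side within-block deficits) and an interior column (which sees correlations on both sides) must be resolved carefully, possibly by bounding the oscillating tail $\sum_{d\ge R}\left|J_0\!\left(\tfrac{2\pi d}{N-1}W\right)\right|$. Second, the feasible set is only the divisors of $N$, so establishing discrete unimodality — that the sign of $f(R')-f(R)$ flips exactly once as $R$ runs through the divisors — requires controlling the aggregate of the $1-2J_0$ increments over all pairs that change regime in a single step, not just the extremal pair at lag $R-1$; this aggregation, together with the sign oscillations of $J_0$, is where the detailed estimates will concentrate.
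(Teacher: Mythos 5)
Your proposal follows essentially the same route as the paper's proof in Appendix L: the same entrywise decomposition of $\boldsymbol{\Sigma}_G(R)-\sigma^2\boldsymbol{\mathbb{I}}(R)$ into within-block residuals $\sigma^2\left(J_0(\cdot)-1\right)$ and cross-block residuals $\sigma^2 J_0(\cdot)$, the same reduction of the induced $1$-norm to a maximum absolute column sum over the blocks, and the same key observation that the marginal change per re-grouped pair is $1-2J_0(\cdot)$, whose sign flips exactly when the extremal within-block correlation at lag $R-1$ crosses $0.5$. The obstacles you flag at the end --- rigorously pinning the extremal column and establishing discrete unimodality over the divisors despite the sign oscillations of $J_0$ --- are genuine, but the paper's own proof passes over them with the same informal monotonicity argument, so your plan is no less complete than the published one.
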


\begin{proof}
    See Appendix \ref{appendixH}.
\end{proof}
Theorem \ref{theo11} shows that the optimal solution of (\ref{P3}) is the greatest divisor of $N$ verifying (\ref{eq47}). Since, $ J_0(1.52) \approx 0.5$, $R_3^*$ is the greatest divisor of $N$ verifying
\begin{equation}
  R_3^* \leq \left \lfloor\frac{1.52(N-1)}{2\pi W} \right \rfloor.
\end{equation}
\noindent We propose to approximate $\hat{\boldsymbol{G}}$ by $\Tilde{\boldsymbol{G}}$ in the sense of the distance in (\ref{eq:distance_func}) for $R =  R^* = \min \left \{ \left \lfloor\frac{1.52(N-1)}{2\pi W} \right \rfloor, N \right \}$. This allows us to approximate the distribution of $\max\{ |\hat{g}_{k,r}|,~ 1\leq k \leq N,~ 1\leq r \leq R^* \}$ by the distribution of $\max\{ |\Tilde{g}_{k,r}|,~ 1\leq k \leq N,~ 1\leq r \leq R^* \}$. Although, $R^*$ is a solution of (\ref{P3}) only when $\left \lfloor\frac{1.52(N-1)}{2\pi W} \right \rfloor$ is a divisor of $N$, simulation results show that it accurately approximates the distribution of $\max\{ |\hat{g}_{k,r}|,~ 1\leq k \leq N,~ 1\leq r \leq R^* \}$ by the distribution of $\max\{ |\Tilde{g}_{k,r}|,~ 1\leq k \leq N,~ 1\leq r \leq R^* \}$ without necessarily being a divisor of $N$.

\begin{theorem} [\textbf{Cumulative Distribution Function}] \hfill\\
  Let $\epsilon = \frac{\epsilon '}{2N} < \frac{\sigma^2 (N-1) }{\pi W}$ for $\epsilon' >0$. Then, for $\epsilon\text{-rank} = \ceil{a W \frac{N}{N-1}}$ and $R = \min \left \{ \left \lfloor\frac{1.52(N-1)}{2\pi W} \right \rfloor, N \right \}$, the CDF of $ max\{|\hat{g}_1|, |\hat{g}_2|,\ldots, |\hat{g}_N| \}$ can be approximated as
\begin{align}
    \begin{split}
        \textcolor{black}{F_{\max\{|\hat{g}_1|, |\hat{g}_2|,\ldots, |\hat{g}_N|\}}(g) \approx \left( F_{\Psi_R}(g) \right) ^{\frac{1}{R}}.}
    \end{split}
    \label{joint_distribution_g_approx}
    \end{align}
    \label{theo:joint_distribution_g_approx}
\end{theorem}

\begin{proof}
We have $\epsilon < \frac{\sigma^2 (N-1) }{\pi W}$, then, according to (\ref{epsilon-rank-approx}), $\epsilon\text{-rank}$ can be approximated as $a W \frac{N}{N-1}$. Further, we have $\epsilon = \frac{\epsilon '}{2N}$ for $\epsilon' >0$, then, $ F_{\Omega_R} (g) \approx F_{\Psi_R} (g)$ for $R = \min \left \{ \left \lfloor\frac{1.52(N-1)}{2\pi W} \right \rfloor, N \right \}$. On the other hand, we have $ F_{\Omega_R}(g) = \left (F_{\max\{|\hat{g}_1|, |\hat{g}_2|,\ldots, |\hat{g}_N|\}}(g) \right)^R$ according to (\ref{theo:omega}). Therefore, $F_{\max\{|\hat{g}_1|, |\hat{g}_2|,\ldots, |\hat{g}_N|\}}(g)  = \left( F_{\Omega_R}(g)\right)^{\frac{1}{R}} \approx \left( F_{\Psi_R}(g) \right) ^{\frac{1}{R}} $, and the expression in (\ref{joint_distribution_g_approx}) is obtained by plugging in the expression of $F_{\Psi_R}$ given in (\ref{theo:Psi}).
\end{proof}

\noindent \textcolor{black}{\textbf{Probability of Outage:} By approximating the CDF of $\max\{|\hat{g}_1|, |\hat{g}_2|,\ldots, |\hat{g}_N|\}$, which in turn approximates the CDF of $g_{\textcolor{black}{\mathrm{FAS}}} = \max \{ |g_1|, \ldots, |g_N|\}$, We obtain an approximation for the probability of outage of the FAS channel as
\begin{equation}
    P_{out}(\gamma_{\mathrm{th}}) \approx \left( F_{\Psi_R}(\sqrt{\frac{\gamma_{\mathrm{th}}}{\Theta}}) \right)^\frac{1}{R}.
    \label{p_out2}
\end{equation}} 

The second stage approximation allows us to further simplify the probability of outage and the CDF of the FAS. In fact, instead of $N$ multi-fold integrals, we represent the probability of outage and the CDF as a power of single integrals.

\section{Numerical Results}\label{sec:results}

In this section, we start by \textcolor{black}{experimentally motivating the need for a more flexible model that closely approximates Jake's model. We also motivate the design of our proposed model. Then, we assess the accuracy of the first and second stage approximations. Finally, we analyze the FAS performance.}

\paragraph{\textcolor{black}{Motivation}}

\textcolor{black}{ In Fig. \ref{Cov_model_comp}, we illustrate the difference between the correlation model in \cite{FAS1} compared with Jake's model in terms of eigenvalues distribution. Specifically, we plot the fraction of eigenvalues exceeding $\epsilon$ versus $\epsilon$, considering $\sigma=1$, $N=100$, and multiple values of $W$. We can see that the two compared correlation models exhibit considerably different eigenvalue profiles. For instance, for $W= 0.2$, Jake's correlation model has only $3\%$ of eigenvalues exceeding $10^{-4}$, while $97 \%$ of the eigenvalues of the correlation model in Ref. \cite{FAS1} are greater than $10^{-4}$. The apparent trend in Jake's correlation model indicates that a large percentage of the eigenvalues are negligible compared to a few dominant eigenvalues. On the other hand, the correlation model of Ref. \cite{FAS1} seems to have comparable eigenvalues that proportionally contribute to the covariance matrix. Therefore, a more flexible model that closely approximates Jake's model is needed to accurately model the FAS channel.}
\begin{figure}[t]
    \centering
    \includegraphics[width=0.8 \textwidth]{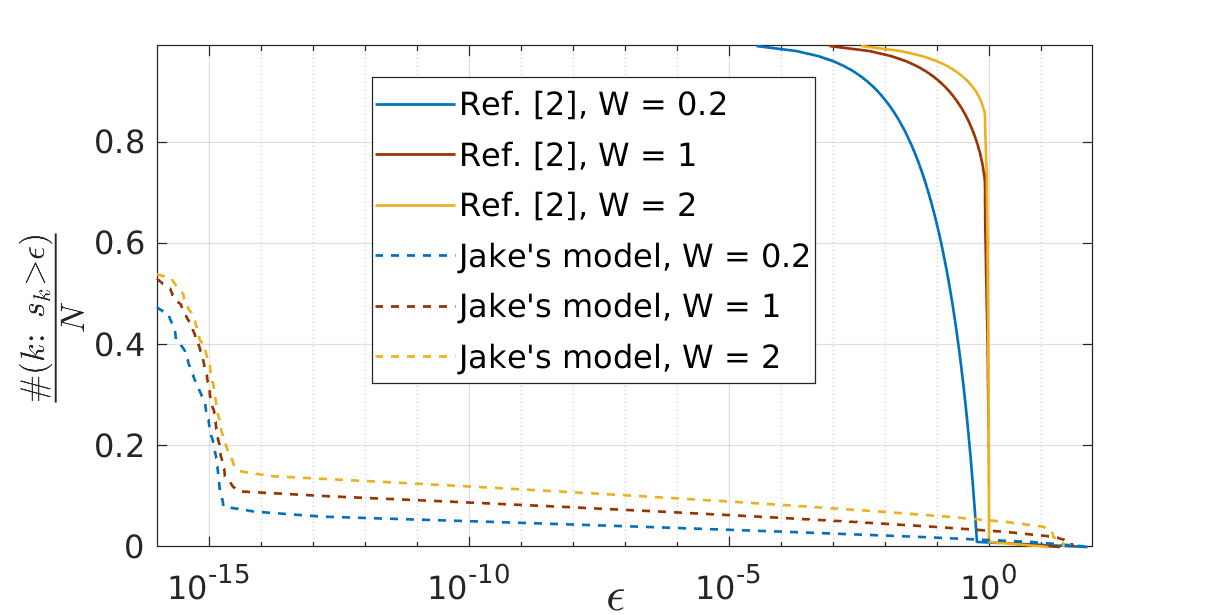}
    \caption{
    \textcolor{black}{Eigenvalues distribution of Jake's correlation model compared with Ref. \cite{FAS1} correlation model for $\sigma=1$, $N=100$, and different values of $W$.} }
    \label{Cov_model_comp}
\end{figure}

 The main intuition behind the way we design the \textcolor{black}{first stage} approximation comes from the decaying profile of the eigenvalues of the covariance matrix $\boldsymbol{\Sigma}_g$ \textcolor{black}{corresponding to Jake's model}. In fact, the covariance matrix $\boldsymbol{\Sigma}_g$ differs from the covariance matrix of its approximation, $\boldsymbol{\Sigma}_{ \hat{g}}$, only at the off-diagonal elements. More precisely, we can write the $(\textcolor{black}{m,n})$-th off-diagonal entry of $\boldsymbol{\Sigma}_g$ as $\sum_{l=1}^N s_{\textcolor{black}{m}} u_{\textcolor{black}{m},l} u_{\textcolor{black}{n},l} $, while the $(\textcolor{black}{m,n})$-th off-diagonal entry of $\boldsymbol{\Sigma}_{ \hat{g}}$ is written as  $\sum_{l=1}^{\epsilon \text{-rank}} s_{\textcolor{black}{m}} u_{\textcolor{black}{m},l} u_{\textcolor{black}{n},l} $. \textcolor{black}{Therefore,} the more decaying are the eigenvalues $s_{\textcolor{black}{m}}$ for $\textcolor{black}{m} > \epsilon \text{-rank}$, the more negligible is their contribution to the sum $\sum_{l=1}^{N} s_{\textcolor{black}{m}} u_{\textcolor{black}{m},l} u_{\textcolor{black}{n},l} $. Consequently, $\sum_{l=1}^{\epsilon \text{-rank}} s_{\textcolor{black}{m}} u_{\textcolor{black}{m},l} u_{\textcolor{black}{n},l} \approx \sum_{l=1}^N s_{\textcolor{black}{m}} u_{\textcolor{black}{m},l} u_{\textcolor{black}{n},l}$ and  $\boldsymbol{\Sigma}_{ \hat{g}} \approx \boldsymbol{\Sigma}_g$, which improves the accuracy of the first stage approximation.

\textcolor{black}{In Fig. \ref{eigenvalues}, we plot $\frac{\textcolor{black}{\mathrm{\#}}(k:~s_k > \epsilon)}{N}$ versus $\epsilon$}, for $\sigma =1$, and for multiple values of $N$ and $W$ to illustrate the decaying profile of the eigenvalues. For instance, for $N=200$ and $W = 0.2$, $P(s_k > 3\times 10^{-15})$ is less than $0.045$, which means that more than $95 \%$ of the eigenvalues of $\boldsymbol{\Sigma}_g$ are less than $3\times 10^{-15}$. In this case, it makes sense to take $\epsilon\text{-rank} = \lfloor N \times 0.045 \rfloor = 9$ to ensures \textcolor{black}{a satisfactory approximation}. In other words, the first stage approximation considers only the contribution of the dominant eigenvalues of the covariance matrix, which well captures the correlations between the ports.

Fig. \ref{eigenvalues} also demonstrates that the percentage of dominant eigenvalues decreases as $N$ increases and increases as $W$ increases. For example, for a fixed $W=2$, the eigenvalues exceeding $7.5 \times 10^{-15}$ decreased from $30 \%$ to $8 \%$ when $N$ increased from $50$ to $200$ . On the other hand, they increased from $6.5 \%$ to $8 \%$ when $W$ increased from $1$ to $2$ for a fixed $N=200$. We recall that the length of the antenna array is $W \lambda$, and the spacing between every two ports is $\frac{W \lambda}{N-1}$. Therefore, if we fix $N$, the larger is $W$, the more decorrelated and spaced are the FAS ports. In this case, the number of dominant eigenvalues keeps increasing as we increase $W$ to reach $N$ when the covariance matrix becomes diagonal with diagonal elements (i.e. eigenvalues) equal to $\sigma^2$. Furthermore, if we fix $W$, the larger $N$, the closer and more correlated the FAS ports, and the smaller the number of dominant eigenvalues of the covariance matrix. 

To conclude, \textcolor{black}{Fig. \ref{Cov_model_comp}} and \textcolor{black}{F}ig. \ref{eigenvalues} show that \textcolor{black}{Jake's correlation model can have a considerable percentage of negligible eigenvalues when a large number of ports is considered in a small space. In this case,} taking $\epsilon\text{-rank}$ as the number of dominant eigenvalues makes $\epsilon\text{-rank} \ll N$ and ensures a considerable reduction in the number of multi-fold integrals (\textcolor{black}{P}roposition \ref{gain}), \textcolor{black}{while guaranteeing a satisfactory approximation.}  
\begin{figure}[t]
    \centering
    \includegraphics[width=0.8 \textwidth]{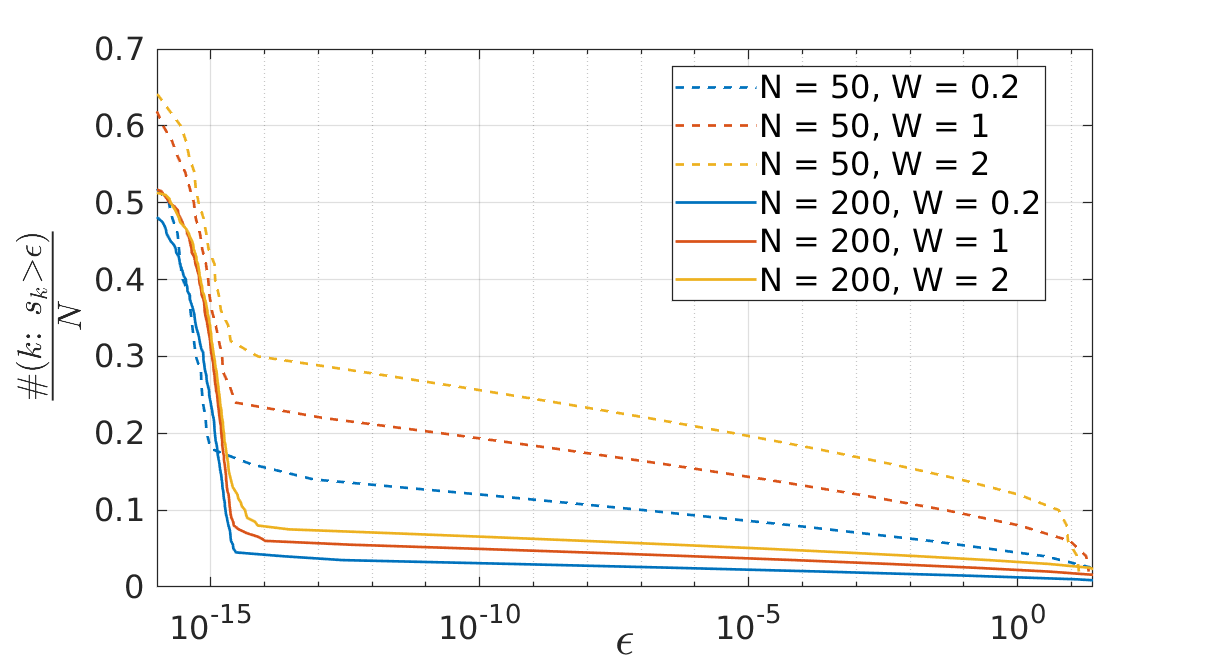}
    \caption{
    \textcolor{black}{Eigenvalues distribution of $\boldsymbol{\Sigma}_g$ for $\sigma=1$ and different values of $N$ and $W$.} }
    \label{eigenvalues}
\end{figure}

\paragraph{First Stage Approximation} \textcolor{black}{F}ig. \ref{fig:epsilonrank} shows \textcolor{black}{Monte Carlo} simulations of the FAS channel versus the first stage approximation model for $W=1$, $N=100$ and $\textcolor{black}{\sigma} = 10$. We compare their empirical CDFs to investigate the influence of $\epsilon$ on the accuracy of the first stage approximation. We can see that the approximation improves as $\epsilon$ decreases (\textit{i.e.} as $\epsilon$-rank increases), which confirms the result of \textcolor{black}{T}heorem \ref{convergence}. Furthermore, we can see that high accuracy is obtained by only taking $\epsilon$-rank $= 5 \ll N$. Therefore, according to \textcolor{black}{P}roposition \ref{gain}, the number of the multi-fold integrals in the approximated probability of outage is reduced by $0.9$ (from 100 integrals to 10 integrals), which is a considerable computational gain.
\begin{figure}[t]
    \centering
    \includegraphics[width = 0.8\linewidth]{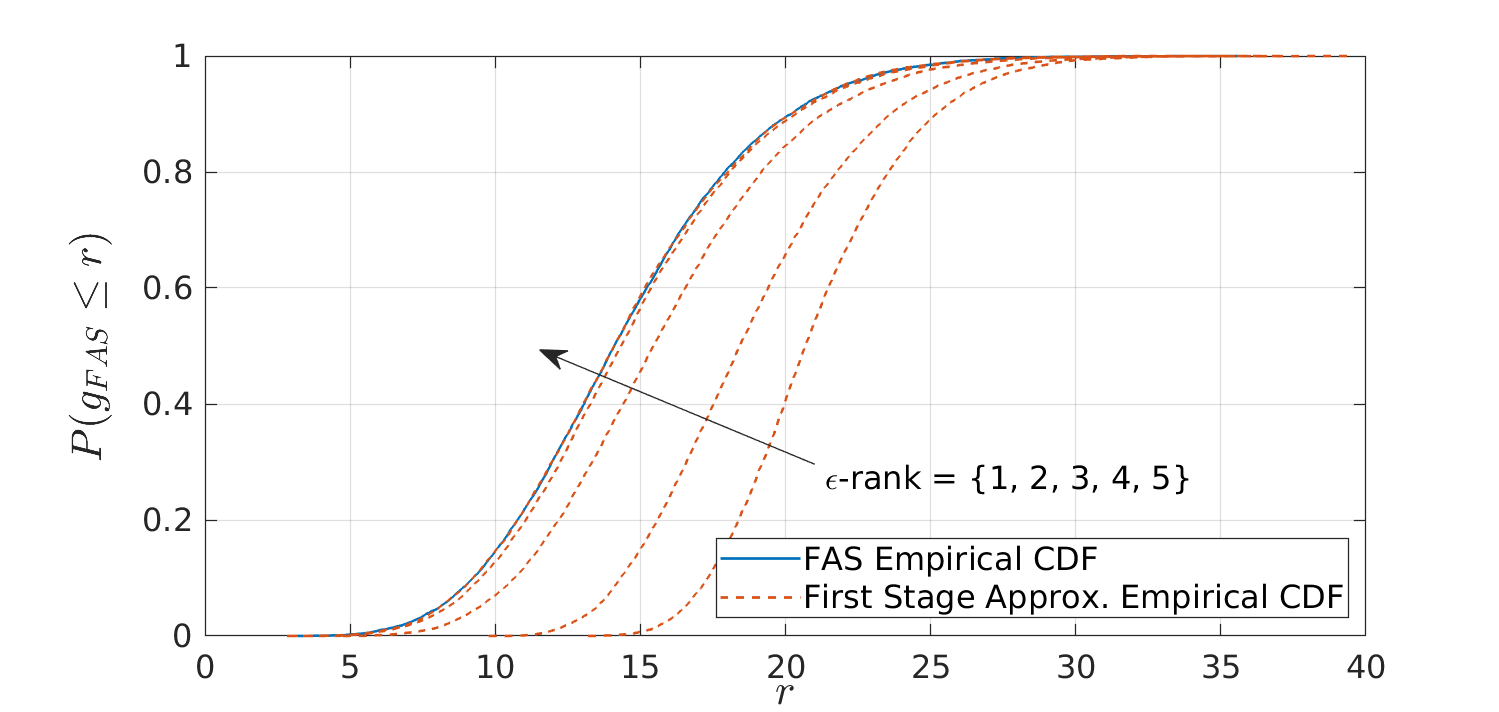}
    \caption{Empirical CDF of FAS versus first stage approximation. }
    \label{fig:epsilonrank}
\end{figure}

The parameter in question, as far as the first approximation is concerned, is $\epsilon$-rank. It can be determined numerically by counting the number of eigenvalues of $\textcolor{black}{\boldsymbol{\Sigma}}_g$ that exceed a certain threshold $\epsilon$ for given $N$ and $W$. However, for more insight, we investigate how $\epsilon$-rank varies as a function of the problem parameters $N$ and $W$. First, we determine an asymptotic expression of $\epsilon$-rank for a large $N$, given by \textcolor{black}{T}heorem \ref{theo:epsilonrank}. Then, we choose a threshold $\epsilon = \frac{\sigma^2}{2N}$. In other words, $\epsilon$-rank will be the number of eigenvalues of $\boldsymbol{\Sigma}_g$ exceeding $\frac{\sigma^2}{2N}$, and the eigenvalues less than $\frac{\sigma^2}{2N}$ will be considered negligible in the first stage approximation. \textcolor{black}{Thus,} we have on one hand that $\frac{N-1}{\pi W} > \frac{2}{\pi} \geq 0.63$ since $\frac{W}{N-1} < \frac{1}{2}$. On the other hand, $\frac{1}{2N} < 0.5$. Therefore, $\epsilon = \frac{\sigma^2}{2N} < \sigma^2 \frac{N-1}{\pi W}$, and according to \textcolor{black}{T}heorem \ref{theo:epsilonrank}, the expression of $\epsilon$-rank becomes independent of $\epsilon$, and can be approximated by $\ceil{2 W \frac{N}{N-1}}$ for a large $N$, as in (\ref{epsilon-rank-approx0}). 

For more precision around the smaller values of $N$, we propose to approximate $\epsilon\text{-rank}$ as $ \ceil{a W \frac{N}{N-1}}$ where $a$ is a parameter that we determine numerically. We consider the $N\times N$ matrix $\textbf{T}_N$ defined such that $ (\textbf{T}_N)_{\textcolor{black}{m,n}} = J_0 \left( \frac{2 \pi W (\textcolor{black}{m-n}) }{N-1} \right) $ $\text{for }1\leq \textcolor{black}{m,n}\leq N$. On one hand, we have $\textbf{T}_N = \textcolor{black}{\boldsymbol{\Sigma}}_g$ for $\sigma=1$. On the \textcolor{black}{other hand}, our choice of threshold $\epsilon = \frac{\sigma^2}{2N}$ ensures that $\epsilon$-rank does not depend on $\sigma^2$ (\textcolor{black}{T}heorem \ref{theo:epsilonrank}). Therefore, finding $\epsilon$-rank of $\textcolor{black}{\boldsymbol{\Sigma}}_g$ for an arbitrary $\sigma^2$ is equivalent to finding $\epsilon$-rank of $\textbf{T}_N$. To determine $a$ numerically, we simply count the number of eigenvalues exceeding $\epsilon = \frac{1}{2N}$, for $N \in \{ 10,\ldots, 300 \}$ and $W\in [0.1,5]$. Then, we determine $a$ that minimizes the mean squared error (MSE) between the actual $\epsilon\text{-rank}$ and the approximated $\epsilon\text{-rank}$ given by the formula $\epsilon\text{-rank}= \ceil{a W \frac{N}{N-1}}$. The minimization of the MSE gives $a= 3.1935 $. Therefore, $\epsilon\text{-rank} \approx \ceil{3.1935 W \frac{N}{N-1}}$, which has been validated by simulation results. For instance, in \textcolor{black}{F}ig. \ref{fig:epsilonrank}, we can see that taking $\epsilon\text{-rank}= 4$ guarantees a satisfactory approximation of the CDF, which is also given by $\ceil{3.1935 W \frac{N}{N-1}}$ for $N=100$ and $W=1$. We also can see from the approximated expression of $\epsilon\text{-rank}$ that it varies very fast with $W$ and slowly with $N$. This can also be observed in \textcolor{black}{F}ig. \ref{eigenvalues}. Further, we can see that for our target ranges of $N \in  \{ 10,\ldots, 300 \}$ and $W\in [0.1,5]$, $\ceil{3.1935 W \frac{N}{N-1}} \ll N$. Therefore, the values taken by $\epsilon$-rank are very small compared to $N$. This again guarantees a considerable reduction in the number of multi-fold integrals of the approximation. 

\begin{figure}[t]
    \centering
    \includegraphics[width = 0.8\linewidth]{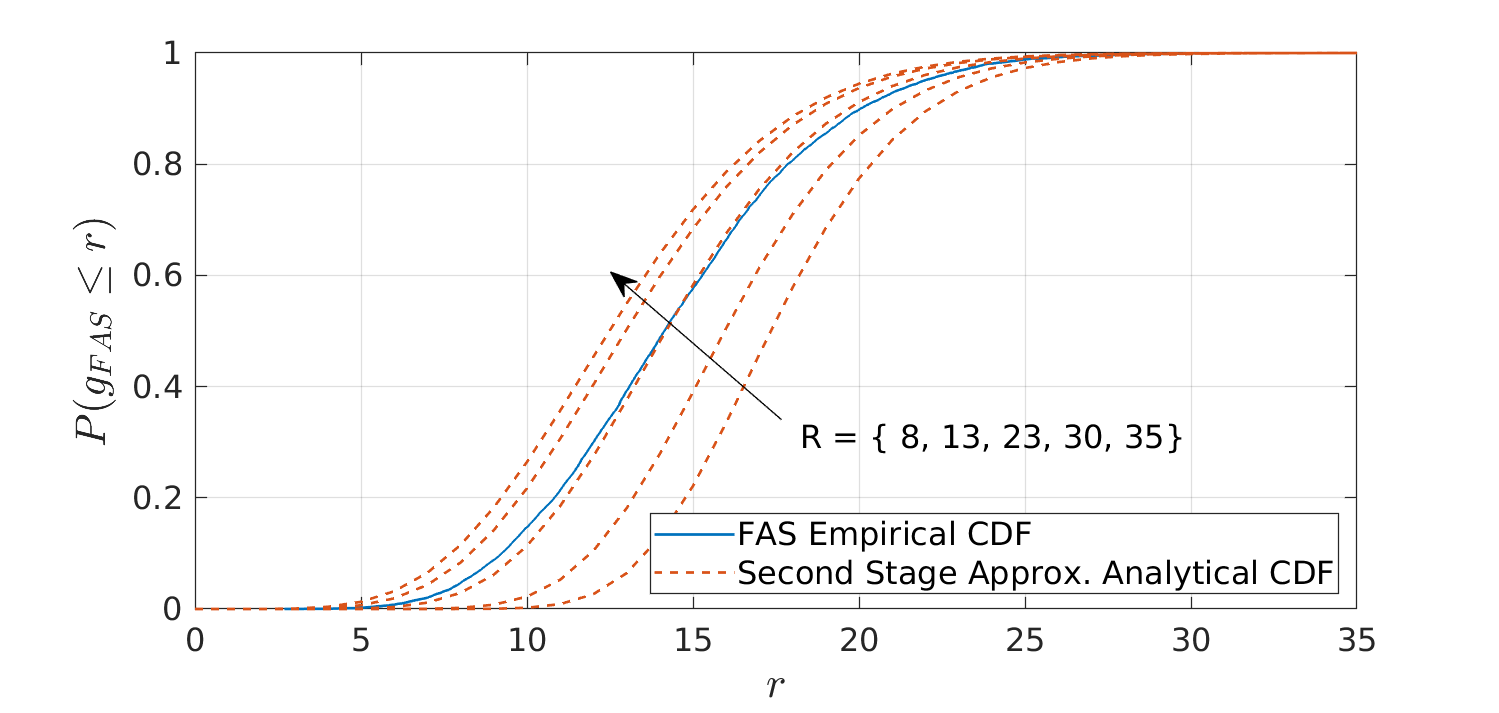}
    \caption{Empirical CDF of FAS versus second stage approximation. }
    \label{fig:R}
\end{figure}

\paragraph{Second Stage Approximation} 

To further improve mathematical tractability, we design the second stage approximation that provides a single-integral expression of the CDF as in \textcolor{black}{T}heorem \ref{theo:joint_distribution_g_approx}, and we assess its accuracy. Fig. \ref{fig:R} shows \textcolor{black}{Monte Carlo simulation} of the FAS channel versus the analytical CDF expression given by the second stage approximation for $W=1$, $N=100$ and $\textcolor{black}{\sigma} = 10$. We can see that the approximation starts improving as $R$ increases from $8$ to $13$. It reaches a satisfactory approximation for $R=23$, and as $R$ keeps increasing from $30$ to $35$, the approximation starts degrading. Therefore, we investigate the optimal value of $R$ that gives the best approximation as a function of the problem parameters. An approximated optimal value is given by  $R^* = \min \left \{ \left \lfloor\frac{1.52(N-1)}{2\pi W}\right \rfloor, N \right \}$. Although the proof of optimality is provided for the relaxed problem (\ref{P3}), simulation results show that $R^*$ provides a satisfactory approximation for the general optimization problem (\ref{P1}). For instance, taking the case of \textcolor{black}{F}ig. \ref{fig:R}, we have $R^* = \min \left \{ \left \lfloor\frac{1.52\times 99}{2\pi}\right \rfloor, 100 \right \} = 23$.

The second stage approximation provides a satisfactory single-integral approximation of the CDF of the FAS channel. Therefore, we compare our second stage approximation CDF to the analytical CDF in \cite[eq.
(16)]{FAS1}, and the empirical CDF of the FAS channel. In \textcolor{black}{F}ig. \ref{fig:W1N40and200}, we take $\sigma = 10$, we consider $N=40$ and $N=200$ and we fix $W=1$. By examining the compared CDFs in \textcolor{black}{F}ig. \ref{fig:W1N40and200}, we can see two main observations. First, our model provides a more accurate approximation of the FAS empirical CDF. In fact, we can see that the model from \cite{FAS1} is optimistic because it lower-bounds the empirical CDF of the FAS channel, and therefore, it lower-bounds the probability of outage. Second, we can see that as $N$ varies from $40$ to $200$, the FAS empirical CDF, as well as our approximation, remain almost unchanged. This may indicate that the probability of outage does not necessarily decrease as $N$ increases. However, when $N$ increases from $40$ to $200$, the analytical CDF in \cite{FAS1} significantly shifts to the right, indicating an optimistic decrease in the probability of outage. In \textcolor{black}{F}ig. \ref{fig:N200W1and4}, we take $\sigma = 10$, we consider $W=1$ and $W=4$ and we fix $N=200$. By examining the compared CDFs, we can see again that our second stage approximation of the CDF provides a more accurate result of the empirical CDF of the FAS channel than the analytical CDF in \cite{FAS1}. Furthermore, we can see that as $W$ varies from $1$ to $4$, the CDF in \cite{FAS1} remains almost unchanged while the FAS empirical CDF, as well as our approximation, show a considerable shift to the right. This indicates that for $N=200$, increasing the length of the FAS line space can have a major impact on the probability of outage, which is not captured by the model in \cite{FAS1}. 

To conclude, these experiments show that, for a fixed $N$, increasing the spatial separation between the ports by increasing $W$, and thus decreasing their inter-correlation, can significantly reduce the probability of outage. On the other hand, if we fix $W$, decreasing the spatial separation between the FAS ports by increasing $N$, and thus increasing their inter-correlation, does not seem to have a major impact on the probability of outage. 
\begin{figure}[t]
\begin{subfigure}{.5\textwidth}
  \centering
  \includegraphics[width=\linewidth]{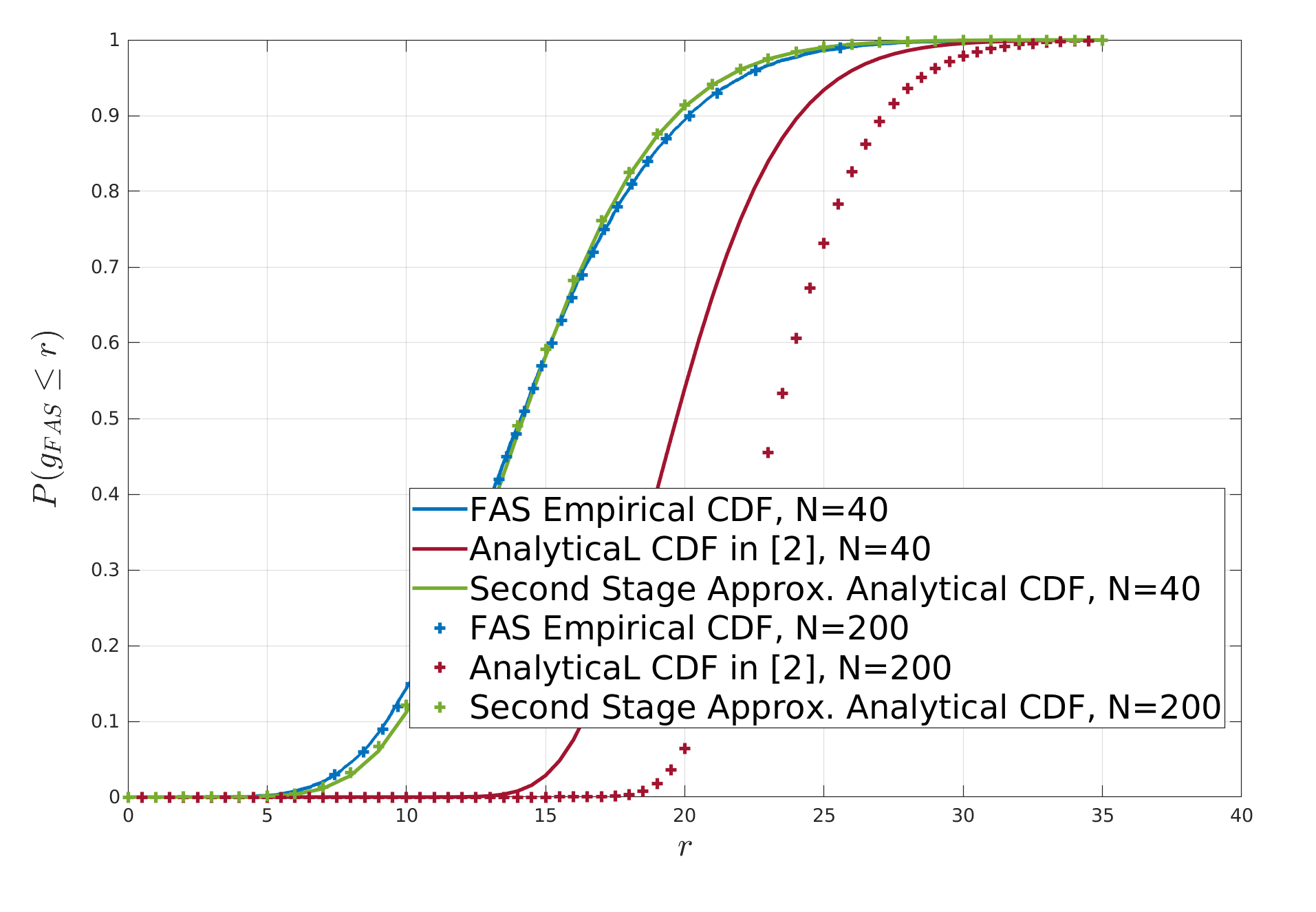}  
  \caption{CDF for fixed $W=1$ and $\sigma = 10$.}
  \label{fig:W1N40and200}
\end{subfigure}
\begin{subfigure}{.5\textwidth}
  \centering
  \includegraphics[width=\linewidth]{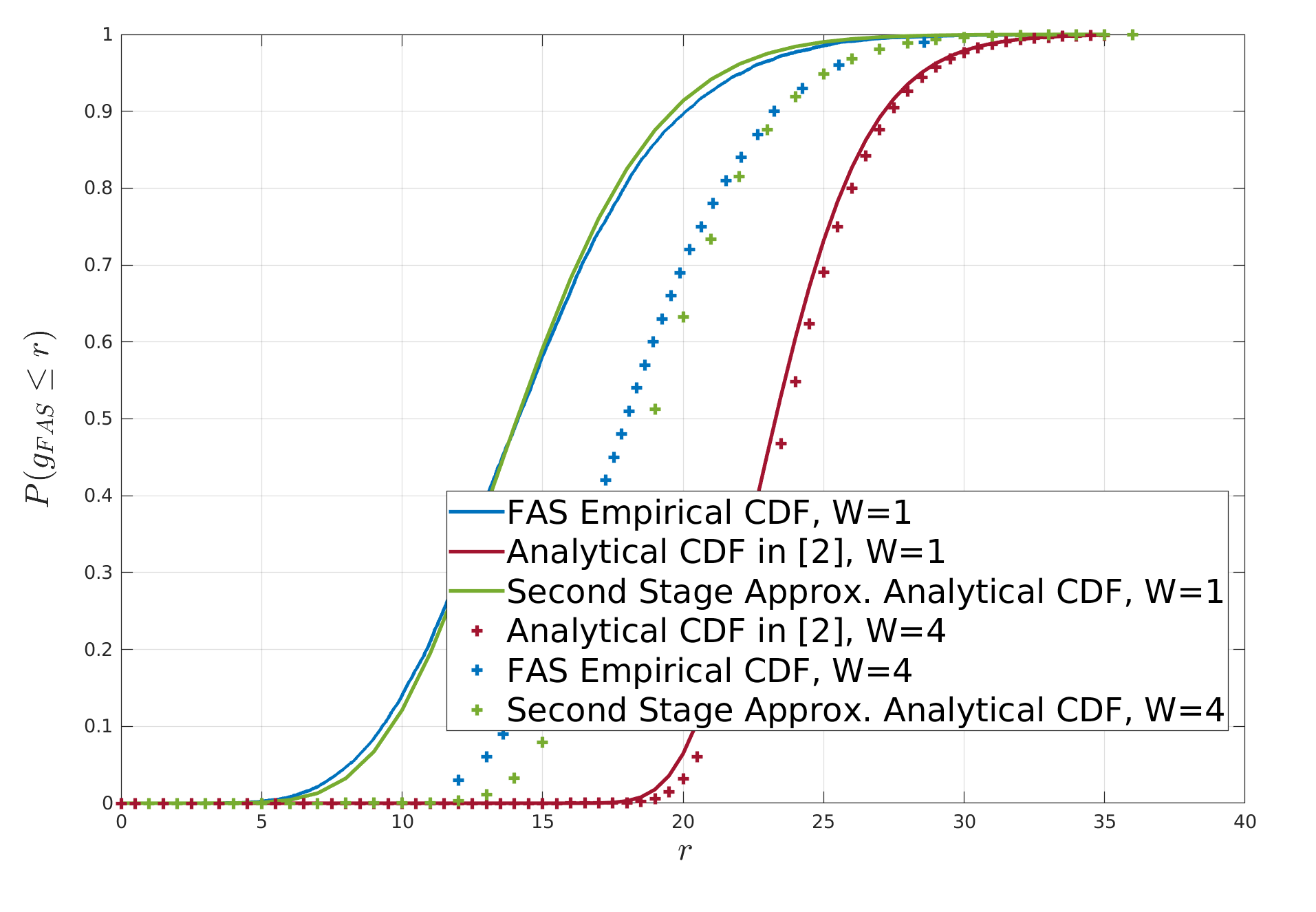}  
  \caption{CDF for fixed $N=200$ and $\sigma = 10$.}
  \label{fig:N200W1and4}
\end{subfigure}
\caption{\textcolor{black}{Empirical CDF of FAS versus second stage approximation CDF and CDF in \cite{FAS1}}.}
\label{fig:approximation_vs_1}
\end{figure}
These two observations do not match the results given by \cite{FAS1} for the tested values of $N$ and $W$ in \textcolor{black}{F}ig. \ref{fig:approximation_vs_1}. Nevertheless, it is important to highlight that we are considering a relatively high density of ports. Indeed, this density does not only depend on the $W$ to $N$ ratio, but it also depends on the wavelength $\lambda$, since the inter-ports spacing is $\frac{W \lambda}{N-1}$. According to \cite{FAS1}, sub-6GHz is particularly suitable for FAS because the Rayleigh fading model becomes inaccurate for the millimeter-wave bands. If we consider the frequency $5$GHz, we will have a port approximately every 1.2 millimeters for $W=1$, $N=40$ and $W=4$, $N=200$. In the case of $W=1$, $N=200$, the inter-antenna spacing can go down to 0.3 millimeters. In these cases, \textcolor{black}{F}ig. \ref{fig:approximation_vs_1}, show that the CDF of the FAS \cite{FAS1} fails to match the empirical cdf of the FAS. However, for a few spaced ports, the CDF of the FAS in \cite{FAS1} approaches more the empirical CDF, and it has been shown to be exact for $N=2$.

\begin{figure}[t]
\begin{subfigure}{.49\textwidth}
  \centering
  \includegraphics[width=\linewidth]{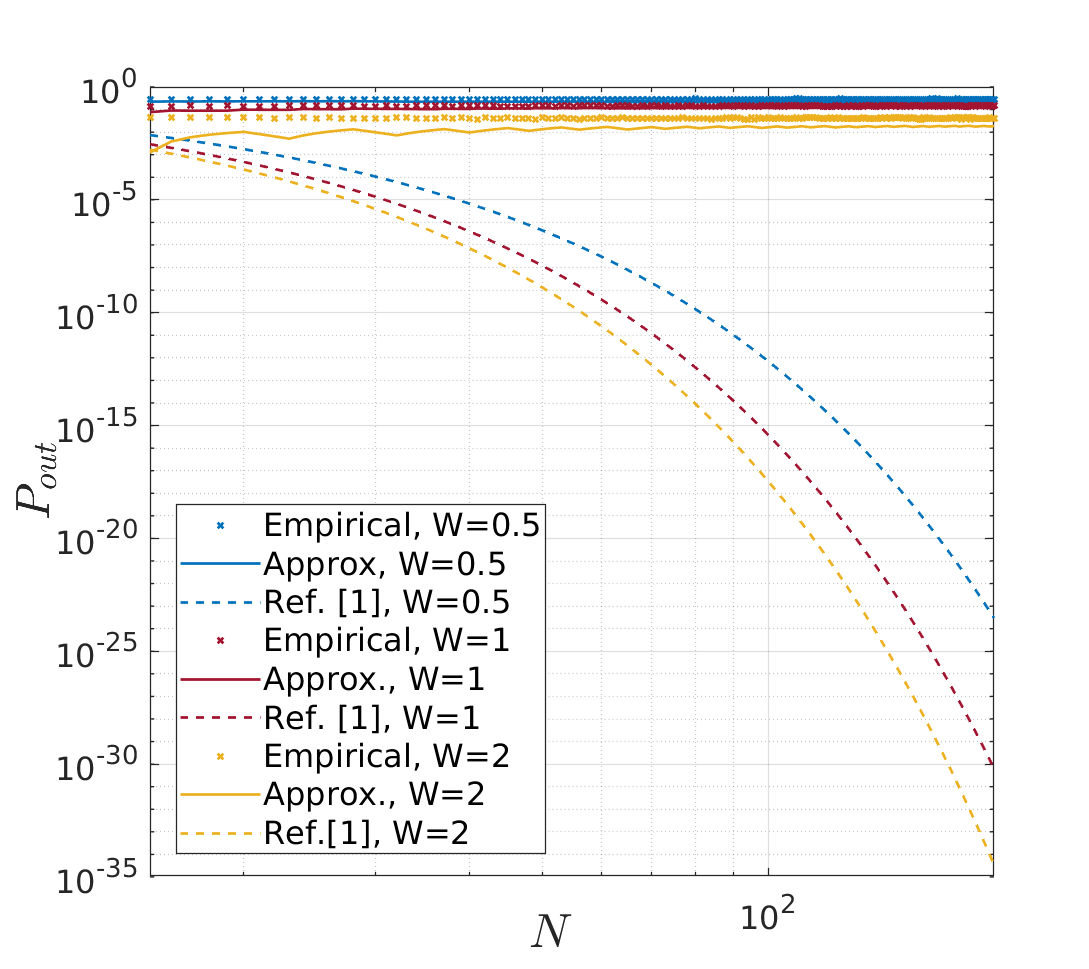}  
  \caption{Probability of Outage for $\gamma_{th}/\Gamma = 0$~dB.}
  \label{fig:poutvsN1}
\end{subfigure}
\begin{subfigure}{.49\textwidth}
  \centering
  \includegraphics[width=\linewidth]{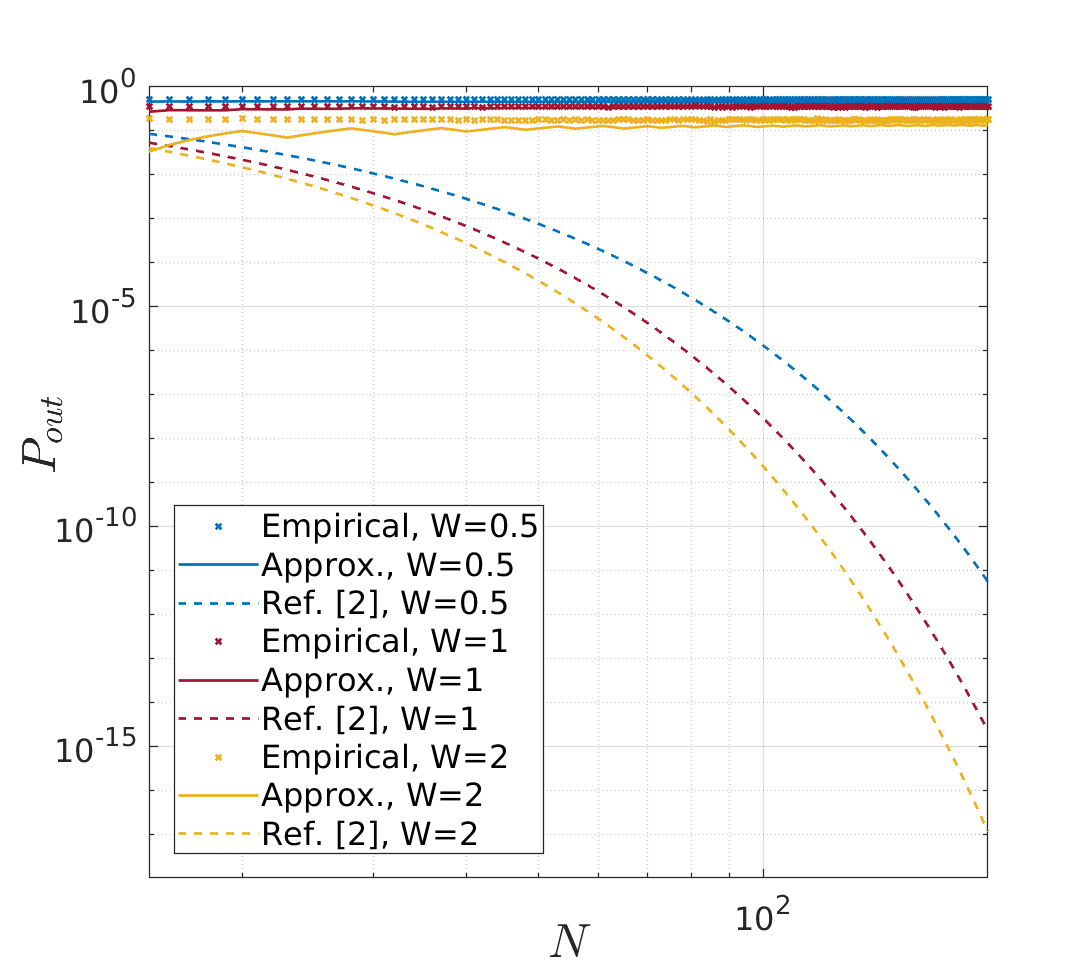}  
  \caption{Probability of Outage for $\gamma_{th}/\Gamma = 2$~dB.}
  \label{fig:poutvsN2}
\end{subfigure}\\[1ex]
\centering
\begin{subfigure}{0.49\textwidth}
  \centering
  \includegraphics[width=\linewidth]{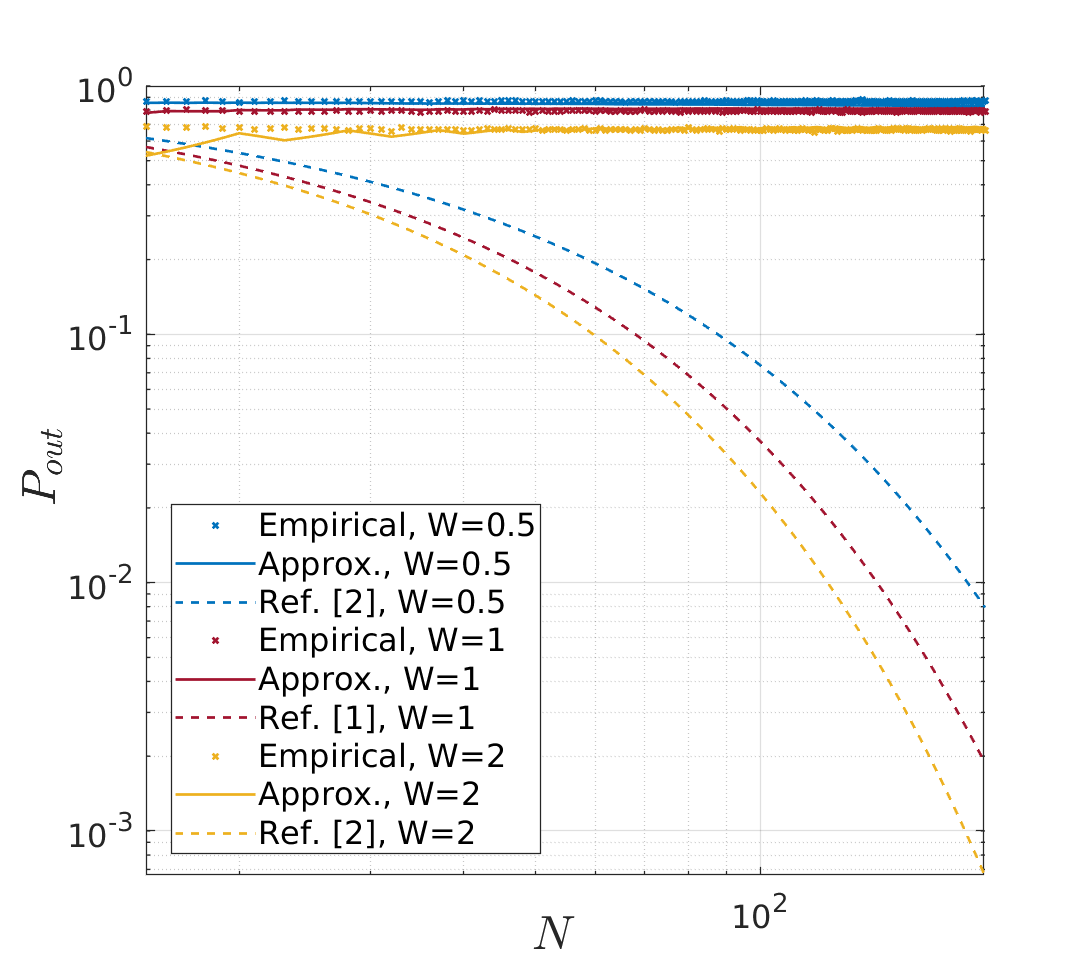}  
  \caption{Probability of Outage for $\gamma_{th}/\Gamma = 5$~dB.}
  \label{fig:poutvsN3}
\end{subfigure}
\caption{ \textcolor{black}{Empirical $P_{out}$ of FAS versus second stage approximation $P_{out}$ and $P_{out}$ in \cite{FAS1}.}}
\label{fig:poutvsN}
\end{figure}

\paragraph{FAS performance Analysis} \textcolor{black}{T}o assess the FAS performance for a larger range of $N$, we plot the outage probability $P_{out}$ versus the number of ports $N$, for different values of target SNR $\gamma_{th}/\Gamma$, and different values of $W$. More precisely, in \textcolor{black}{F}ig. \ref{fig:poutvsN}, we compare the probability of outage given by the second stage approximation (Approx.) to the one given in \cite[eq. (16)]{FAS1} (Ref. \cite{FAS1}), and to the empirical probability of outage (Empirical) \textcolor{black}{obtained by $100000$ draws of Monte Carlo simulations}. 

In \textcolor{black}{Fig. \ref{fig:poutvsN1}, Fig. \ref{fig:poutvsN2}, and Fig. \ref{fig:poutvsN3} we fix $\gamma_{th}/\Gamma=0$dB, $\gamma_{th}/\Gamma=2$dB, and $\gamma_{th}/\Gamma=5$dB respectively. Further, for each of the three scenarios, we consider three values of $W$, namely, 0.5, 1, and 2. As expected, the probability of outage rises as the SNR target increases and drops as the space increases with $W$. However, the main observation is that the probability of outage of our model closely approximates the empirical probability of outage, which is considerably high and almost constant compared to the one in \cite{FAS1}. More precisely, if we consider the SNR target $\gamma_{th}/ \Gamma=0$dB and $W=1$, both the empirical and approximated probabilities of outage are around $10^{-1}$ independently of the number of ports $N$. On the other hand, the probability of outage in \cite{FAS1} decreases without a floor to reach $1.52~10^{-23}$ for $N=150$. This shows that parameterizing the channel as in \cite{FAS1} does not accurately capture the correlation between the FAS ports, providing an optimistic performance analysis. In fact, while it ensures that the correlation between the reference port and the other ports follows Jake's model, the other correlations do not follow the covariance function given by Jake's model. Therefore, simplifying the correlation structure of these highly dependent channels can considerably influence the achievable performance analysis.}

\begin{figure}[t]
    \centering
    \includegraphics[width = 0.8\linewidth]{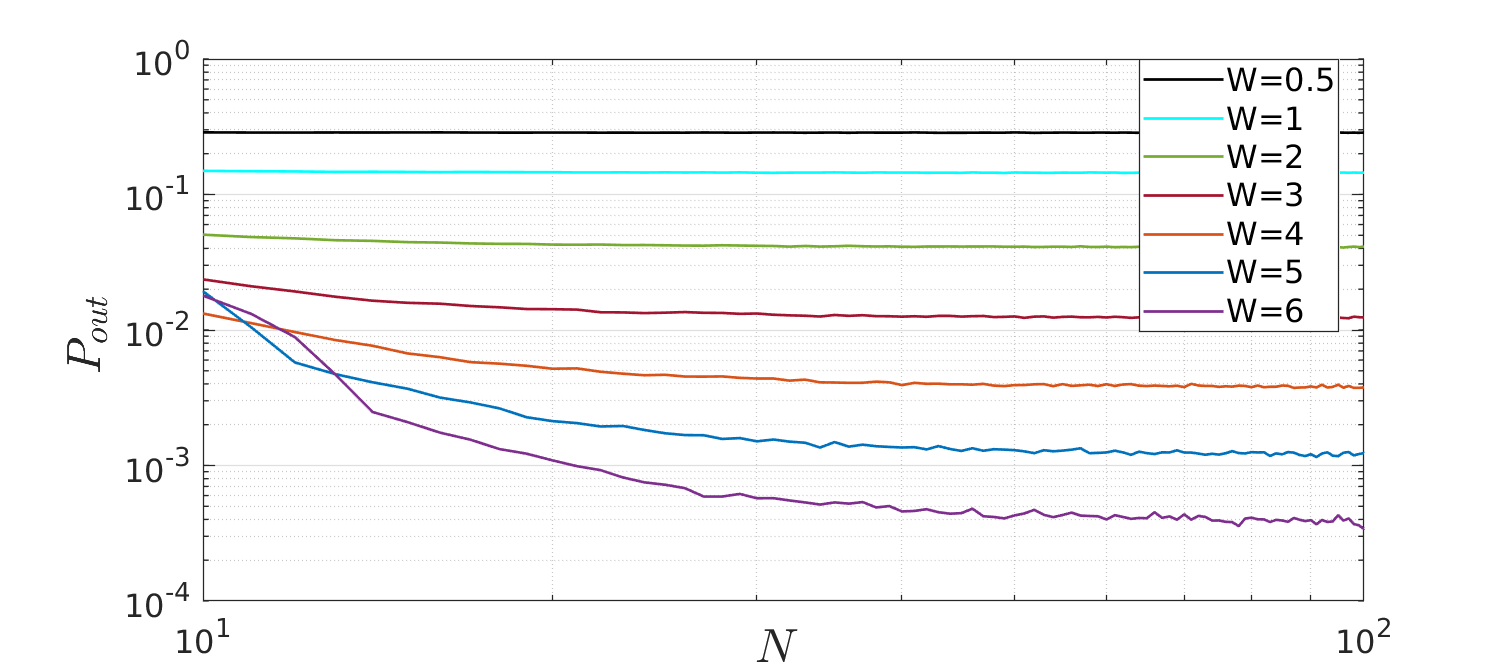}
    \caption{\textcolor{black}{Empirical $P_{out}$ of the FAS channel for $\gamma_{th}/\Gamma=0$dB.} }
    \label{fig:diversity_gain}
\end{figure} 

\textcolor{black}{ In Fig. \ref{fig:poutvsN}, we consider the same values of $W$ as in \cite{FAS1} to carry out the comparison between the different results in the same settings of \cite{FAS1}. However, it is important to highlight that the considered values of $W$ can be relatively small. This implies that the ports are in a narrow space (this also depends on the wavelength $\lambda$), resulting in a low level of diversity. On the other hand, if $W$ is larger, there might still be non-negligible diversity to exploit by having more ports. To investigate this, we plot the empirical probability of outage of the FAS channel, in Fig. \ref{fig:diversity_gain}, for $\gamma_{th}/\Gamma=0$dB, and different values of $W$ that go up to $6$. We can see that for small values of $W$ such as $0.5$, $1$, and $2$, the probability of outage remains almost constant as $N$ increases. However, for the larger values of $W$, the probability of outage decreases first as $N$ increases, indicating a diversity gain, and then it saturates. Furthermore, the probability of outage saturates at larger values of $N$ as $W$ increases. In other words, as $N$ rises for a fixed $W$, we are reducing the space between the ports, increasing their inter-correlation, and therefore, lowering diversity gain until saturation. Therefore, unlike \cite{FAS1}, we show that the probability of outage does not drop without a floor as $N$ increases but rather saturates. However, there still is a diversity gain in a small space, as it is shown for the larger values of $W$ in Fig. \ref{fig:diversity_gain}. However, it is constrained by space and needs careful analysis to investigate the specific ranges of $N$ and $W$ where we can have a significant gain with respect to traditional multi-antenna systems. 
}

\section{Conclusion}
Fluid antenna systems allow for adopting multiple antennas in a mobile device by exploiting diversity hidden in a small space, and previous related works have revealed considerable performance gain. Nevertheless, assessing the true benefits of such arising technology remains a function of the careful modeling of the FAS channel. \textcolor{black}{It is, however, challenging to model the channel in an analytically tractable manner due to the high correlation between the FAS ports. Therefore, this work proposes a two-stage approximation of the FAS channel. Unlike previous works in the literature, where the correlation model of the channel is rather simplified, the proposed approximations incorporate more parameters into the channel model that are carefully chosen to follow Jake's correlation model closely. To this end, our approximation offers a more accurate look at the achievable performance of FAS by capturing the correlation effect on diversity gain. The paper presents the first stage approximation, which considerably reduces the number of multi-fold integrals in outage probability. Then, we provide a second-stage approximation representing the probability of outage as a power of a single integral.} Finally, our numerical results assess the accuracy of the proposed approximations and compare our work with the previous related work. In our results, \textcolor{black}{ we argue that the probability of outage of the FAS channel does not decrease without a floor as $N$ increases, but rather saturates, which constrains the FAS performance by space limitations. This is mainly due to considering a less idealized correlation model. However, various other practical considerations remain open topics including the impact of delay and frequency deviations on CSI estimation at the ports. Therefore, our work opens the door for further research to address more implementation limitations of FAS and to investigate designs where diversity gain can be guaranteed.}

\appendices

\section{} \label{appendixA}

\textcolor{black}{From the diagonalization of the covariance matrix $\boldsymbol{\Sigma}_g$, we can write 
\begin{equation}
    (\boldsymbol{\Sigma}_g)_{m,n} = \sum_{l=1}^{N} s_l u_{m,l} u_{n,l}, ~ \forall m,n \in \{1,\ldots, N\}
    \label{eq:sigma_g}
\end{equation}}
\noindent \textcolor{black}{ For $\sigma_h = \sigma$, $M=N$ and $\alpha_{m,n} = \frac{\sqrt{s_n}}{\sigma} u_{m,n},~ 1\leq m,n \leq N $, (\ref{cov_h}) and (\ref{eq:sigma_g}) imply that
\begin{equation}
    (\textcolor{black}{\boldsymbol{\Sigma}}_g)_{m,n} =  (\textcolor{black}{\boldsymbol{\Sigma}}_h)_{m,n},~~ \text{for } m,n \in \{1,\ldots, N\}. 
    \label{same_dist}
\end{equation}
 Since $\textbf{g}$ and $\textbf{h}$ have zero mean vectors and the same covariance matrix according to (\ref{same_dist}), then they have the same joint distribution. }

\noindent  \textcolor{black}{Furthermore, we have
\begin{equation}
    \sum_{n=1}^{M} \alpha_{m, n}^{2} = \sum_{n=1}^{N} \frac{s_n}{\sigma^2} u_{m,n}^{2} = 1.
\end{equation}
Therefore, The representation in (\ref{eq:theo12}) follows immediately by substituting the parameters. }


\section{} \label{appendixB}

\textcolor{black}{ 
Let $\textbf{D}=\mathrm{diag}(\sum_{k =\epsilon\text{-rank}+1}^N s_k u_{1,k}^2,\ldots,\sum_{k =\epsilon\text{-rank}+1}^N s_k u_{N,k}^2)$, $\textbf{U}=(\textbf{u}_1, \ldots, \textbf{u}_N)$, $\textbf{S}=\mathrm{diag}(s_1,\ldots,s_N)$, $\textbf{S}'=\mathrm{diag}(s_1,\ldots,s_{\epsilon\text{-rank}},0,\ldots, 0)$.}

\noindent \textcolor{black}{ 
We can write $\boldsymbol{\Sigma}_g = \textbf{U} \textbf{S} \textbf{U}^{\mathrm{T}}$ and $\boldsymbol{\Sigma}_{\hat{g}} = \textbf{U} \textbf{S}' \textbf{U}^{\mathrm{T}} + \textbf{D}= \boldsymbol{\Sigma}'+ \textbf{D}$. According to \cite{FrechetDistance}, we have 
\begin{align}
\begin{split}
     W_2(\mathcal{CN}(\textbf{0}_{N \times 1}, \boldsymbol{\Sigma}_g),\mathcal{CN}(\textbf{0}_{N \times 1}, \boldsymbol{\Sigma}_{\hat{g}})) 
     &= \mathrm{Tr}( \boldsymbol{\Sigma}_g + \boldsymbol{\Sigma}_{\hat{g}} - 2 \left(\boldsymbol{\Sigma}_g \boldsymbol{\Sigma}_{\hat{g}} \right)^{\frac{1}{2}})\\
     & \leq \mathrm{Tr}( \boldsymbol{\Sigma}_g + \boldsymbol{\Sigma}' + \epsilon \textbf{I} - 2 \left(\boldsymbol{\Sigma}_g (\boldsymbol{\Sigma}'+ \textbf{D}) \right)^{\frac{1}{2}})\\
     & \leq N \epsilon + \mathrm{Tr}( \boldsymbol{\Sigma}_g + \boldsymbol{\Sigma}'  - 2 \left(\boldsymbol{\Sigma}_g \boldsymbol{\Sigma}' \right)^{\frac{1}{2}})\\
     & \leq N \epsilon + || \left(\boldsymbol{\Sigma}_g \right)^{\frac{1}{2}}  -  \left( \boldsymbol{\Sigma}' \right)^{\frac{1}{2}})||_\mathrm{F} \text{, because $\boldsymbol{\Sigma}'$ and $\boldsymbol{\Sigma}_g$ commute}\\
     &= N \epsilon + || \left( \textbf{S} \right)^{\frac{1}{2}}  -  \left( \textbf{S}' \right)^{\frac{1}{2}})||_\mathrm{F}\\
     & \leq N \epsilon + (N - \epsilon\text{-rank}) \epsilon^2.
\end{split}
\end{align}
}


\section{} \label{appendixC}
\textcolor{black}{For $\epsilon > 0$ and $\epsilon\text{-rank} < N$, let $\textbf{a}$ and $\textbf{b}$ be the random vectors defined as 
\begin{align}
    \textbf{a}=(a_1, a_2,\dots, a_{\epsilon\text{-rank}})^{\textcolor{black}{\mathrm{T}}}, ~~~~ \textbf{b}=(b_1, b_2,\dots, b_{\epsilon\text{-rank}})^{\textcolor{black}{\mathrm{T}}}.
\end{align}}
\noindent \textcolor{black}{We can see that $\hat{g}_1|(\textbf{a},\textbf{b}), \hat{g}_2|(\textbf{a},\textbf{b}),\ldots, \hat{g}_N|(\textbf{a},\textbf{b})$ are independent and we have 
\begin{equation}
    \hat{g}_k |(\textbf{a},\textbf{b}) \sim \mathcal{CN} \left( \sum_{l=1}^{\epsilon\text{-rank}} \sqrt{s_l} u_{k,l}(a_l +\textcolor{black}{\mathrm{j}} b_l), (\sigma^2-\sum_{l=1}^{\epsilon\text{-rank}} s_l u_{k,l}^2) \right),~~ \forall k \in \{1, 2,\ldots, N \},
\end{equation}}
\noindent \textcolor{black}{which implies that $|g_k|~|(\textbf{a},\textbf{b})$ is a Rician distribution \cite{stuber} and its CDF can be written as 
\begin{equation}
    F_{|\hat{g}_k|~|(\textbf{a},\textbf{b})}(r_k) = 1-Q_1 \left(\frac{\sqrt{2\left(\sum \limits_{l=1}^{\epsilon\text{-rank}} \sqrt{s_l} u_{k,l} a_l\right)^2+ 2\left(\sum \limits_{l=1}^{\epsilon\text{-rank}} \sqrt{s_l} u_{k,l} b_l\right)^2}}{\sqrt{\sigma^2 -\sum \limits_{l=1}^{\epsilon\text{-rank}} s_l u_{k, l}^{2}} }, \frac{\sqrt{2} r_k}{ \sqrt{\sigma^2 -\sum \limits_{l=1}^{\epsilon\text{-rank}} s_l u_{k, l}^{2}} }\right).
\end{equation}}
\noindent \textcolor{black}{By independence, we can write 
\begin{equation}
     F_{(|\hat{g}_1|, |\hat{g}_2|,\ldots, |\hat{g}_N|)|(\textbf{a},\textbf{b})}(r_1, r_2,\ldots, r_N) =  \prod_{k=1}^N F_{|\hat{g}_k|~|(\textbf{a},\textbf{b})}(r_k).
\end{equation}}
\noindent \textcolor{black}{Therefore, 
\begin{equation}
    F_{(|\hat{g}_1|, |\hat{g}_2|,\ldots, |\hat{g}_N|)}(r_1, r_2,\ldots, r_N) =  \idotsint\limits_{-\infty}^{\infty} \prod_{k=1}^N F_{|\hat{g}_k|~|(\textbf{a},\textbf{b})}(r_k) f_{\textbf{a},\textbf{b}} ~~ da_1\ldots da_{\epsilon\text{-rank}} ~db_1\ldots db_{\epsilon\text{-rank}},
    \label{proofappendixd}
\end{equation}}
\noindent \textcolor{black}{where $f_{a,b} (a_1,\ldots, a_{\epsilon\text{-rank}}, b_1,\ldots, b_{\epsilon\text{-rank}}) = \prod_{l=1}^{\epsilon\text{-rank}} \frac{1}{\pi} \exp{(-(a^2_l+b^2_l))}$.}
\noindent \textcolor{black}{Finally, considering that $F_{\max \{|\hat{g}_1|,\ldots,|\hat{g}_N|\}}(r) =  F_{|\hat{g}_1|,\ldots,|\hat{g}_N|} (r,\ldots,r)$, we get the expression in (\ref{cdfgfas}) by plugging $f_{a,b}$ and $F_{|\hat{g}_k|~|(a,b)}(r)$ in (\ref{proofappendixd}).}



\section{} \label{AppendixD}

We have $\textbf{T}_N$ of size $N \times N$ such that, for $ k,\ell \in \{1,\ldots,N\}$ and $0 < c < \frac{1}{2}$, 
\begin{align}
    \begin{split}
        (\textbf{T}_N)_{(k,\ell)} &= \sigma^2 J_0 \left( 2 \pi (k-\ell) c \right)\\
        &= \frac{\sigma^2}{2 \pi} \int_{-\pi}^{\pi} e^{i2 \pi(k-\ell) c \sin(\tau)} d\tau. 
    \end{split}
\end{align}

\noindent We consider the following Fourier transform pair, 
\begin{align}
\hat{f}(\lambda) &=\sum_{k=-\infty}^{\infty} \sigma^2 J_0 \left( 2 \pi k c \right) e^{i k \lambda} ; \lambda \in[-\pi , \pi]. \\
\sigma^2 J_0 \left( 2 \pi k c \right) &=\frac{1}{2 \pi} \int_{-\pi}^{\pi} \hat{f}(\lambda) e^{-i k \lambda} d \lambda.
\end{align}
\noindent Therefore, 
\begin{align}
    \begin{split}
        \hat{f}(\lambda) &=\sum_{k=-\infty}^{\infty} \frac{\sigma^2}{2 \pi} \int_{-\pi}^{\pi} e^{i2 \pi k c \sin(\tau)} d\tau e^{i k \lambda}\\
        &=  \frac{\sigma^2}{2 \pi} \int_{-\pi}^{\pi} \sum_{k=-\infty}^{\infty} e^{i 2 \pi k (c \sin{(\tau) + \frac{\lambda}{2 \pi}})} d\tau\\
        &= \frac{\sigma^2}{2 \pi} \int_{-\pi}^{\pi} \sum_{n=-\infty}^{\infty} \delta{(c \sin{(\tau)} + \frac{\lambda}{2 \pi} - n)} d\tau \text{ (by the Poisson sum formula \cite{lapidoth}).}\\
    \end{split}
\end{align}

\noindent For $\delta{(c \sin{(\tau)} + \frac{\lambda}{2 \pi} - n)}$ to be non zero, we need $c \sin{(\tau)} + \frac{\lambda}{2 \pi} - n$ to be zero. Thus,
\begin{align}
    c \sin{(\tau)} + \frac{\lambda}{2 \pi} - n = 0 &\Rightarrow  \sin{(\tau)} = \frac{1}{c} (n -\frac{\lambda}{2 \pi} ) \\
    &\Rightarrow -1 \leq \frac{1}{c} (n - \frac{\lambda}{2 \pi} ) \leq 1 \\
    &\Rightarrow \frac{\lambda}{2 \pi} - c  \leq n \leq \frac{\lambda}{2 \pi} + c. \label{cond_n}
\end{align}
\noindent Since $0 < c < \frac{1}{2}$ and $- \pi \leq \lambda \leq \pi$, then $\frac{\lambda}{2 \pi} + c < 1$ and $\frac{\lambda}{2 \pi} - c > -1$. Therefore, the only integer n that can verify (\ref{cond_n}) is zero. Therefore, 
\begin{align}
    \hat{f}(\lambda) &= \frac{\sigma^2}{2 \pi} \int_{-\pi}^{\pi} \delta{(c \sin{(\tau)} + \frac{\lambda}{2 \pi})} d\tau~ \mathbf{1}_{ \{0 \in [\frac{\lambda}{2 \pi}- c, \frac{\lambda}{2 \pi} + c ] \}}.
\end{align}
\noindent By using the identity \cite{diracdelta}, $ \delta( \Phi(\tau)) = \sum_{j} \frac{1}{|\Phi'(\tau_j)|} \delta(\tau- \tau_j)$ where $\tau_j$ is such $\Phi(\tau_j) = 0$ and $\Phi'(\tau_j) \neq 0$, we can write
\begin{align}
    \hat{f}(\lambda) = \frac{2 \sigma^2}{\sqrt{ (2\pi c)^2 - \lambda^2}} ~ \mathbf{1}_{ \{ -2\pi c \leq \lambda \leq 2\pi c \}}.
\end{align}
By applying the result from \cite{toeplitz}, corollary 5, we can write
\begin{align}
\begin{split}
    D(x) &= \frac{1}{2 \pi} \int_{\hat{f}(\lambda) \leq x} d\lambda \\
    &= \frac{1}{2 \pi} \int_{-\pi}^{\pi} \mathbf{1}_{ \{ \hat{f}(\lambda) \leq x \} } d\lambda \\
    &= \frac{1}{2 \pi} \int_{-\pi}^{-2 \pi c} \mathbf{1}_{ \{ 0 \leq x \} } d\lambda + \frac{1}{2 \pi} \int_{-2 \pi c}^{2 \pi c} \mathbf{1}_{ \{  \frac{2 \sigma^2}{\sqrt{ (2\pi c)^2 - \lambda^2}} \leq x \} } d\lambda + \frac{1}{2 \pi} \int_{2 \pi c}^{2 \pi} \mathbf{1}_{ \{ 0 \leq x \} } d\lambda \\ 
    &= \left \{ 
    \begin{array}{cc}
        1 - 2c & \text{ if } 0< x < \frac{\sigma^2}{\pi c} \\
        1 - 2c + \sqrt{ (2 c)^2 - \frac{4 \sigma^4}{(\pi x)^2}} & \text{ if } x \geq \frac{\sigma^2}{\pi c}.
    \end{array}
    \right.
\end{split}
\end{align}



\section{} \label{appendixE}

We can see that $\textcolor{black}{\mathrm{E}}(\hat{g}_{k,r}) = 0$ for all $k \in \{1,\ldots, N\}$ and $r \in \{1,\ldots, R\}$. Furthermore, for $i,j \in \{1,\ldots, N\}$ and $m,n \in \{1, \ldots, R\}$, 
\begin{align}
    \begin{split}
        (\textcolor{black}{\boldsymbol{\Sigma}}_{\hat{G}})^{m,n}_{i,j} &= \textcolor{black}{\mathrm{Cov}}(\hat{g}_{i,m}, \hat{g}_{j,n})
        =  \left\{
        \begin{array}{ll}
            0 & \mbox{if } m \neq n \\
            \sigma^2  & \mbox{if } i=j \text{ and } m=n \\
            \sum\limits^{\epsilon\text{-rank}}_{l=1} s_l u_{i,l} u_{j,l} & \mbox{if } i \neq j \text{ and } m=n,
        \end{array} 
        \right.
    \end{split}
\end{align}
 \noindent where $(\boldsymbol{\Sigma}_{\hat{G}})^{m,n}_{i,j}$ is the entry $(i,j)$ in the block $(m,n)$ of $\boldsymbol{\Sigma}_{\hat{G}}$. Therefore, the result in Proposition \ref{prop:covG_hat_tilde} follows directly from the expression of the covariance matrix of $\hat{\textbf{g}}$ written as
\begin{equation}
    (\boldsymbol{\Sigma}_{\hat{g}})_{i,j} = \left\{
    \begin{array}{ll}
        \sigma^2 & \mbox{if } i=j \\
        \sum_{l=1}^{\epsilon\text{-rank}} s_l u_{i,l} u_{j,l} & \mbox{if } i \neq j
    \end{array}
\right. \text{for } i,j \in \{1,\ldots,N\}.
\end{equation}

Further, We can see that $\textcolor{black}{\mathrm{E}}(\Tilde{g}_{k,r}) = 0$ for all $k \in \{1,\ldots, N\}$ and $r \in \{1,\ldots, R\}$. Additionally, for $i,j \in \{1,\ldots, R\}$ and $m,n \in \{1, \ldots, N\}$, 
\begin{align}
\begin{split}
    (\textcolor{black}{\boldsymbol{\Sigma}}_{\Tilde{G}})^{m,n}_{i,j} &= \textcolor{black}{\mathrm{Cov}}(\Tilde{g}_{m,i}, \Tilde{g}_{n,j})\\
    &=  \left\{
    \begin{array}{ll}
        0 & \mbox{if } m \neq n \\
        \sigma^2  & \mbox{if } i=j \text{ and } m=n \\
        \sum\limits^{\epsilon\text{-rank}}_{l=1} s_l u_{n,l}^2 & \mbox{if } i \neq j \text{ and } m=n,
    \end{array} 
    \right.
    \end{split}
\end{align}
 \noindent where $(\boldsymbol{\Sigma}_{\hat{G}})^{m,n}_{i,j}$ is the entry $(i,j)$ in the block $(m,n)$ of $\boldsymbol{\Sigma}_{\hat{G}}$. Therefore, the result in Proposition \ref{prop:covG_hat_tilde} follows directly from taking the matrix $\boldsymbol{\Sigma}_{k}$, for $k \in \{1,\ldots,N\}$ as
\begin{equation}
    (\boldsymbol{\Sigma}_{k})_{i,j} = \left\{
    \begin{array}{ll}
        \sigma^2 & \mbox{if } i=j \\
        \sum_{l=1}^{\epsilon\text{-rank}} s_l u_{k,l}^2 & \mbox{if } i \neq j
    \end{array}
\right. \text{for } i,j \in \{1,\ldots,R\}.
\end{equation}


\section{} \label{appendixF}
 
\textcolor{black}{
\begin{align}
    \begin{split}
        F_{\Omega_R}(g) &= P(|\hat{g}^{(r)}_k| \leq g, ~ \forall k \in \{1,\ldots, N\}, ~ \forall r \in \{1,\ldots, R\} )\\
        &= \left ( P(|\hat{g}^{(r)}_k| \leq g, ~ \forall 1\leq k \leq N) \right ) ^R\\
        &= \left( F_{\max\{|\hat{g}_1|, |\hat{g}_2|,\ldots, |\hat{g}_N|\}}(g) \right) ^R
    \end{split}
\end{align}
\begin{align}
    \begin{split}
        F_{\Psi_R}(g)  &= P( \max_{1\leq k \leq N }\{ ~\max_{1\leq r \leq R} \{|\Tilde{g}_{k,r}| \} ~\} \leq g)\\
        &=  \prod_{k=1}^{N} P( ~\max_{1\leq r \leq R} \{|\Tilde{g}_{k,r}| \} \leq g) ~~ \text{(given by the independence property in the Remark (\ref{indep}))}\\
        &= \prod_{k=1}^{N} \textcolor{black}{\mathrm{E}}_{\textbf{a}^{(k)}, \textbf{b}^{(k)}} \left[ P( \max_{1\leq r \leq R} \{|\Tilde{g}_{k,r}|~ | (\textbf{a}^{(k)}, \textbf{b}^{(k)})\} \leq g) \right],
    \end{split}
    \label{eq84}
    \end{align}
\noindent where $\textbf{a}^{(k)} = (a_{1,k}, a_{2,k},\ldots, a_{{\epsilon\text{-rank}},k})^{\textcolor{black}{\mathrm{T}}}$,  $\textbf{b}^{(k)} = (b_{1,k}, b_{2,k},\ldots, b_{{\epsilon\text{-rank}},k})^{\textcolor{black}{\mathrm{T}}}$ and $\textcolor{black}{\mathrm{E}}_{\textbf{a}^{(k)}, \textbf{b}^{(k)}}[.]$ is the expectation with respect to the random variable $(\textbf{a}^{(k)},\textbf{b}^{(k)})$. From the expression of $\Tilde{g}_{k,r}$ given in (\ref{def:g_tilde}), we can see that $\Tilde{g}_{k,1}| (\textbf{a}^{(k)}, \textbf{b}^{(k)}), \Tilde{g}_{k,2}| (\textbf{a}^{(k)}, \textbf{b}^{(k)}),\ldots, \Tilde{g}_{k,R}| (\textbf{a}^{(k)}, \textbf{b}^{(k)})$ are independent and identically distributed. Further, for $k \in \{1, \ldots, N\}$, 
\begin{equation}
        \Tilde{g}_{k,r}| (\textbf{a}^{(k)}, \textbf{b}^{(k)}) \sim \mathcal{CN}( \sum_{l=1}^{\epsilon\text{-rank}} \sqrt{s_l} u_{k,l} (a_{l,k}+\textcolor{black}{\mathrm{j}} b_{l,k}), \sigma^2- \sum_{l=1}^{\epsilon\text{-rank}} s_l u_{k,l}^2), ~ \forall r \in \{1,\ldots, R\}.
\end{equation}
\noindent This implies that  $ |\Tilde{g}_{k,r}|~ |(\textbf{a}^{(k)}, \textbf{b}^{(k)})$ is Rician distributed \cite{stuber}, and its CDF can be written as
    \begin{equation}
        F_{|\Tilde{g}_{k,r}|~|(\textbf{a}^{(k)}, \textbf{b}^{(k)})}(g) = 1-Q_1 \left(\frac{\sqrt{2\left(\sum \limits_{l=1}^{\epsilon\text{-rank}} \sqrt{s_l} u_{k,l} a_{l,k}\right)^2+ 2\left(\sum \limits_{l=1}^{\epsilon\text{-rank}} \sqrt{s_l} u_{k,l} b_{l,k}\right)^2}}{\sqrt{\sigma^2 -\sum \limits_{l=1}^{\epsilon\text{-rank}} s_l u_{k, l}^{2}} }, \frac{\sqrt{2} g}{ \sqrt{\sigma^2 -\sum \limits_{l=1}^{\epsilon\text{-rank}} s_l u_{k, l}^{2}} }\right).
    \end{equation}
\noindent Therefore, 
\begin{align}
    \begin{split}
        P( \max_{1\leq r \leq R} \{|\Tilde{g}_{k,r}|~ | (\textbf{a}^{(k)}, \textbf{b}^{(k)}) \} \leq g) 
        &= \prod_{r=1}^{R} P( |\Tilde{g}_{k,r}|~ | (\textbf{a}^{(k)}, \textbf{b}^{(k)}) \leq g) ~ \text{(independence)}\\
        &= \left[ P( |\Tilde{g}_{k,r}|~ | (a^{(k)}, b^{(k)}) \leq g)\right]^R ~\text{(identical distribution)} \\
        &= \left[ F_{|\Tilde{g}_{k,r}|~|(\textbf{a}^{(k)}, \textbf{b}^{(k)})}(g) \right] ^R.
    \end{split}
\end{align}
\noindent Furthermore, 
\begin{equation}
    f_{\textbf{a}^{(k)},\textbf{b}^{(k)}} (a_{1,k},\ldots, a_{\epsilon\text{-rank},k}, b_{1,k},\ldots, b_{\epsilon\text{-rank},k}) = \prod_{l=1}^{\epsilon\text{-rank}} \frac{1}{\pi} \exp{(-(a^2_{l,k}+b^2_{l,k}))}.
\end{equation}
\noindent Therefore, we can write 
\begin{align}
    \begin{split}
        &\textcolor{black}{\mathrm{E}}_{\textbf{a}^{(k)}, \textbf{b}^{(k)}} \left[ P( \max_{1\leq r \leq R} \{|\Tilde{g}^{(r)}_k|~ | (\textbf{a}^{(k)}, \textbf{b}^{(k)})\} \leq g) \right] = \idotsint\limits_{-\infty}^{\infty} \prod_{l=1}^{\epsilon\text{-rank}} \frac{1}{\pi} \exp{(-(a^2_{l,k}+b^2_{l,k}))} \\ &\left[ 1-Q_1 \left(\frac{\sqrt{2\left(\sum \limits_{l=1}^{\epsilon\text{-rank}} \sqrt{s_l} u_{k,l} a_{l,k}\right)^2+ 2\left(\sum \limits_{l=1}^{\epsilon\text{-rank}} \sqrt{s_l} u_{k,l} b_{l,k}\right)^2}}{\sqrt{\sigma^2 -\sum \limits_{l=1}^{\epsilon\text{-rank}} s_l u_{k, l}^{2}} }, \frac{\sqrt{2} g}{ \sqrt{\sigma^2 -\sum \limits_{l=1}^{\epsilon\text{-rank}} s_l u_{k, l}^{2}} }\right) \right] ^R \\ &da_{1,k}, \ldots, da_{\epsilon\text{-rank},k}, db_{1,k}, \ldots, db_{\epsilon\text{-rank},k}\\
        &= \textcolor{black}{\mathrm{E}}_{(\textbf{a},\textbf{b})} \left[ \left[ 1-Q_1 \left(\frac{\sqrt{2\left(\sum \limits_{l=1}^{\epsilon\text{-rank}} \sqrt{s_l} u_{k,l} a_{l}\right)^2+ 2\left(\sum \limits_{l=1}^{\epsilon\text{-rank}} \sqrt{s_l} u_{k,l} b_{l}\right)^2}}{\sqrt{\sigma^2 -\sum \limits_{l=1}^{\epsilon\text{-rank}} s_l u_{k, l}^{2}} }, \frac{\sqrt{2} g}{ \sqrt{\sigma^2 -\sum \limits_{l=1}^{\epsilon\text{-rank}} s_l u_{k, l}^{2}} }\right) \right] ^R \right], 
    \end{split}
\end{align}
\noindent because $(\textbf{a}^{(k)},\textbf{b}^{(k)})$ is identically distributed as $(\textbf{a},\textbf{b})$.
\noindent On the other hand, we define the random variable $Z_k$ as: 
    \begin{align}
        \begin{split}
            Z_k &= \left( \sum \limits_{l=1}^{\epsilon\text{-rank}} \sqrt{s_l} u_{k,l} a_l\right)^2+ \left(\sum \limits_{l=1}^{\epsilon\text{-rank}} \sqrt{s_l} u_{k,l} b_l\right)^2 = Z^{(a)}_k + Z^{(b)}_k, 
        \end{split}
    \end{align}
    \noindent where $Z^{(a)}_k = \textbf{a}^{\textcolor{black}{\mathrm{T}}} \boldsymbol{\gamma}_k^{\textcolor{black}{\mathrm{T}}} \boldsymbol{\gamma}_k \textbf{a} $, $Z^{(b)}_k = \textbf{b}^{\textcolor{black}{\mathrm{T}}} \boldsymbol{\gamma}_k^{\textcolor{black}{\mathrm{T}}} \boldsymbol{\gamma}_k \textbf{b}$ and $\boldsymbol{\gamma}_k = (\sqrt{s_1} u_{k,1}, \sqrt{s_2} u_{k,2},\ldots,\sqrt{s_{\epsilon\text{-rank}}} u_{k,\epsilon\text{-rank}})$. To determine the distribution of $Z_k$, we have that 
    \begin{align}
    \begin{split}
        Z^{(a)}_k &= \textbf{a}^{\textcolor{black}{\mathrm{T}}} \boldsymbol{\gamma}_k^{\textcolor{black}{\mathrm{T}}} \boldsymbol{\gamma}_k \textbf{a} = \boldsymbol{\gamma}_k \boldsymbol{\gamma}^T_k  \left( \textbf{a}^{\textcolor{black}{\mathrm{T}}} \frac{\boldsymbol{\gamma}_k^{\textcolor{black}{\mathrm{T}}} \boldsymbol{\gamma}_k }{\boldsymbol{\gamma}_k \boldsymbol{\gamma}^T_k} \textbf{a} \right) = (\sqrt{\boldsymbol{\gamma}_k \boldsymbol{\gamma}^T_k} \textbf{a}^{\textcolor{black}{\mathrm{T}}}) \textbf{P}_k  (\sqrt{\boldsymbol{\gamma}_k \boldsymbol{\gamma}^T_k} \textbf{a}^{\textcolor{black}{\mathrm{T}}})^{\textcolor{black}{\mathrm{T}}}, 
    \end{split}
    \end{align}
    where $\textbf{P}_k = \frac{\boldsymbol{\gamma}_k^{\textcolor{black}{\mathrm{T}}} \boldsymbol{\gamma}_k }{\boldsymbol{\gamma}_k \boldsymbol{\gamma}^T_k} $ is a projector of rank 1. Therefore, $Z^{(a)}_k \sim \chi^2 (1)$ and $f_{Z^{(a)}_k } (z) = \frac{1}{\sqrt{\pi \boldsymbol{\gamma}_k \boldsymbol{\gamma}^T_k z}}\exp(-\frac{z}{\boldsymbol{\gamma}_k \boldsymbol{\gamma}^T_k}) $. 
    Therefore, the pdf of $Z_k$ can be written as: 
    \begin{align}
        f_{Z_k}(z) &= f_{Z^{(a)}_k } (z) * f_{Z^{(b)}_k } (z) = \frac{1}{\boldsymbol{\gamma}_k \boldsymbol{\gamma}^T_k }\exp{\left(-\frac{z}{\boldsymbol{\gamma}_k \boldsymbol{\gamma}^T_k} \right)}, ~~ z>0.
    \end{align}
    Finally, we can write the CDF of $\Psi_R$ as 
    \begin{align}
        \begin{split}
            F_{\Psi_R}(g) &= \prod_{k=1}^{N} \textcolor{black}{\mathrm{E}}_{\textbf{a}^{(k)}, \textbf{b}^{(k)}} \left[ P( \max_{1\leq r \leq R} \{|\Tilde{g}_{k,r}|~ | (\textbf{a}^{(k)}, \textbf{b}^{(k)})\} \leq g) \right] ~ \text{ (by (\ref{eq84}))} \\
            &= \prod_{k=1}^{N} \textcolor{black}{\mathrm{E}}_{Z_k} \left[ \left[ 1-Q_1 \left(\frac{\sqrt{2 Z_k}}{\sqrt{\sigma^2 -\sum \limits_{l=1}^{\epsilon\text{-rank}} s_l u_{k, l}^{2}} }, \frac{\sqrt{2} g}{ \sqrt{\sigma^2 -\sum \limits_{l=1}^{\epsilon\text{-rank}} s_l u_{k, l}^{2}} }\right) \right] ^R \right]\\
            &= \prod_{k=1}^N  \int_{0}^{+\infty} \frac{1}{\sum\limits_{l=1}^{\epsilon\text{-rank}} s_l u_{k, l}^{2}} \exp \left({-\frac{r}{\sum\limits_{l=1}^{\epsilon\text{-rank}} s_l u_{k, l}^{2}}}\right)
         \left( 1-Q_1 \left(\frac{\sqrt{2r}}{\sqrt{\sigma^2 -\sum \limits_{l=1}^{\epsilon\text{-rank}} s_l u_{k, l}^{2}} }, \frac{\sqrt{2}g}{\sqrt{\sigma^2 -\sum \limits_{l=1}^{\epsilon\text{-rank}} s_l u_{k, l}^{2}} }\right) \right)^R dr.
        \end{split}
    \end{align}}
    
   

\section{} \label{appendixG}

By the definitions of the matrices $\boldsymbol{\Sigma}_{\Tilde{G}} (R)$ and $\boldsymbol{\mathbb{I}}(R)$, we can write 
\begin{align}
    \begin{split}
        \norm{ \sigma^2 \boldsymbol{\mathbb{I}}(R)-\boldsymbol{\Sigma}_{\Tilde{G}} (R) }_1 &= \max_{1 \leq k \leq N} \norm{ \sigma^2 \mathbf{1}_{R \times R} - \boldsymbol{\Sigma}_k}_1 \\
        &= \max_{1 \leq k \leq N} (R-1) (\sigma^2 - \sum_{l=1}^{\epsilon\text{-rank}} s_l u_{k,l}^2)\\
        & \leq \max_{1 \leq k \leq N} N(\sigma^2 - \sum_{l=1}^{\epsilon\text{-rank}} s_l u_{k,l}^2) (\text{ for } R \leq N)\\
        & \leq \max_{1 \leq k \leq N} N(\sum_{l=1}^{N} s_l u_{k,l}^2- \sum_{l=1}^{\epsilon\text{-rank}} s_l u_{k,l}^2) \text{ (because } \sum_{l=1}^{N} s_l u_{k,l}^2 = \sigma^2 ) \\
        & \leq \max_{1 \leq k \leq N} N \sum_{l=1+\epsilon\text{-rank}}^{N} s_l u_{k,l}^2\\
        & \leq \max_{1 \leq k \leq N} N \sum_{l=1+\epsilon\text{-rank}}^{N} \epsilon~ u_{k,l}^2 \text{ (by definition of $\epsilon\text{-rank}$})\\
        & \leq N \epsilon \text{ (because} \sum_{l=1+\epsilon\text{-rank}}^{N} u_{k,l}^2 < 1).
    \end{split}
\end{align}
\noindent Therefore, for $\epsilon = \frac{\epsilon'}{2N}$,
\begin{equation}
    \norm{ \sigma^2 \boldsymbol{\mathbb{I}}(R)-\boldsymbol{\Sigma}_{\Tilde{G}} (R) }_1 \leq \frac{\epsilon '}{2}.
    \label{eq100}
\end{equation}

\noindent On the other hand, by the definition of $\boldsymbol{\Sigma}_G$ and $\boldsymbol{\Sigma}_{\hat{G}}$, we can write 
\begin{align}
    \begin{split}
        \norm{\boldsymbol{\Sigma}_{\hat{G}} - \boldsymbol{\Sigma}_G}_1 &= \norm{\boldsymbol{\Sigma}_{\hat{g}} - \boldsymbol{\Sigma}_g}_1.
    \end{split}
    \label{eq101}
\end{align}
\noindent Moreover, for $i,j \in \{1,\ldots,N\}$, we have 
\begin{align}
   (\boldsymbol{\Sigma}_g)_{i,j} &= \sum_{l=1}^{N} s_i u_{i,l} u_{j,l},\\
   \begin{split}
   (\boldsymbol{\Sigma}_{\hat{g}})_{i,j} 
    &= \left\{
    \begin{array}{ll}
        \sum_{l=1}^{N} s_i u_{i,l}^2  & \mbox{if } i=j \\
        \sum_{l=1}^{\epsilon\text{-rank}} s_i u_{i,l} u_{j,l}& \mbox{if } i \neq j
    \end{array}.
    \right.
    \end{split}
\end{align}
We define the matrix $\boldsymbol{\Sigma}'$ such that for $i,j \in \{1,\ldots,N\}$, 
\begin{equation}
    (\boldsymbol{\Sigma}')_{i,j} = \sum_{l=1}^{\epsilon\text{-rank}} s_i u_{i,l} u_{j,l}.
\end{equation}
We consider the max norm $\norm{.}_{max}$ and the spectral norm $\norm{.}_2$ defined for a matrix $\textbf{A}$ of size $N \times N$ as follows: 
\begin{equation}
\begin{array}{ll}
    \norm{\textbf{A}}_{max} = \underset{1 \leq i,j \leq N}{\max} \lvert a_{i,j} \rvert, ~~ & \norm{\textbf{A}}_2 = \sigma_{max}(\textbf{A}).  \\
    \end{array} 
\end{equation}
where $\sigma_{max}(\textbf{A})$ is the maximum singular value of $\textbf{A}$. 
we can see that 
\begin{equation}
    \norm{\boldsymbol{\Sigma}_g - \boldsymbol{\Sigma}_{\hat{g}}}_{max} \leq \norm{\boldsymbol{\Sigma}_g -\boldsymbol{\Sigma}'}_{max},
\end{equation}
because while $\boldsymbol{\Sigma}_g - \boldsymbol{\Sigma}_{\hat{g}}$ and $\boldsymbol{\Sigma}_g - \boldsymbol{\Sigma}'$ have the same off-diagonal elements, $\boldsymbol{\Sigma}_g - \boldsymbol{\Sigma}_{\hat{g}}$ has zeros as diagonal elements and $\boldsymbol{\Sigma}_g - \boldsymbol{\Sigma}' =  \sum_{l=R+1}^{N} s_l \textbf{u}_{l} \textbf{u}_{l}^{\textcolor{black}{\mathrm{T}}} $ has positive diagonal elements. Therefore, 
\begin{align}
\begin{split}
    \norm{\boldsymbol{\Sigma}_g - \boldsymbol{\Sigma}_{\hat{g}}}_{max} &\leq \norm{\boldsymbol{\Sigma}_g - \boldsymbol{\Sigma}'}_{max}\\
    &\leq \norm{\sum_{l=1+\epsilon\text{-rank}}^{N} s_l \textbf{u}_{l} \textbf{u}_{l}^{\textcolor{black}{\mathrm{T}}}}_{max}\\
    &\leq \norm{\sum_{l=1+\epsilon\text{-rank}}^{N} s_l \textbf{u}_{l} \textbf{u}_{l}^{\textcolor{black}{\mathrm{T}}}}_{2}~ \text{ (matrix norm equivalence \cite{wiki2})} \\
    &\leq s_{1+\epsilon\text{-rank}}\\
    &\leq \epsilon.
    \end{split}
\label{eq107}
\end{align}

\noindent This means that, 
\begin{align}
    \norm{\boldsymbol{\Sigma}_{\hat{G}} - \boldsymbol{\Sigma}_G}_1 &= \norm{\boldsymbol{\Sigma}_{\hat{g}} - \boldsymbol{\Sigma}_g}_1 \\
    & \leq N \epsilon \text{ (using the result from (\ref{eq107})).}
\end{align}
\noindent Therefore, for $\epsilon = \frac{\epsilon}{2N}$, 
\begin{equation}
    \norm{\boldsymbol{\Sigma}_{\hat{G}} - \boldsymbol{\Sigma}_G}_1 \leq \frac{\epsilon '}{2}.
    \label{eq110}
\end{equation}

\noindent By combining equations (\ref{eq110}) and (\ref{eq100}), we get

\begin{align}
    \left| \norm{\boldsymbol{\Sigma}_{\hat{G}}(R) - \boldsymbol{\Sigma}_{\Tilde{G}}(R)}_1 - \norm{\boldsymbol{\Sigma}_{G}(R) - \sigma^2 \boldsymbol{\mathbb{I}}(R)}_1  \right| & \leq 
     \norm{\boldsymbol{\Sigma}_{\hat{G}}(R) - \boldsymbol{\Sigma}_{\Tilde{G}}(R) - \boldsymbol{\Sigma}_{G}(R) + \sigma^2 \boldsymbol{\mathbb{I}}(R)}_1\\ & \leq 
     \norm{\boldsymbol{\Sigma}_{\hat{G}}(R) - \boldsymbol{\Sigma}_G (R)}_1 + \norm{ \sigma^2 \boldsymbol{\mathbb{I}}(R)-\boldsymbol{\Sigma}_{\Tilde{G}} (R) }_1 \\ & \leq 
     \frac{\epsilon '}{2} + \frac{\epsilon '}{2} \\ &\leq 
     \epsilon '.
\end{align}

\section{} \label{appendixH}

We have 
\begin{align}
        \boldsymbol{\Sigma}_{G} (R) &= 
        \begin{pmatrix}
        \boldsymbol{\Sigma}_{{g}} & 0 & \cdots & 0 \\
        0 & \boldsymbol{\Sigma}_{{g}} & \cdots & 0 \\
        \vdots  & \vdots  & \ddots & \vdots  \\
        0 & 0 & \cdots & \boldsymbol{\Sigma}_{{g}}
        \end{pmatrix} & 
        \boldsymbol{\mathbb{I}}(R) = 
        \begin{pmatrix}
        \boldsymbol{\mathbf{1}}_{R \times R} & 0 & \cdots & 0 \\
        0 & \boldsymbol{\mathbf{1}}_{R \times R} & \cdots & 0 \\
        \vdots  & \vdots  & \ddots & \vdots  \\
        0 & 0 & \cdots & \boldsymbol{\mathbf{1}}_{R \times R}
    \end{pmatrix},
    \end{align}

\noindent We can see that $\boldsymbol{\Sigma}_{G}(R)$ is a block diagonal matrix with $R$ equal blocks, each of size $N \times N$. On the other hand, $\boldsymbol{\mathbb{I}} (R)$ is a block matrix with $N$ equal blocks, each of size $R \times R$. Then, both matrices $\boldsymbol{\Sigma}_{G}(R)$ and $\boldsymbol{\mathbb{I}} (R)$ are of size $NR \times NR$. Furthermore, for $R$ divisor of $N$, $\boldsymbol{\Sigma}_{G}(R) - \sigma^2 \boldsymbol{\mathbb{I}} (R)$ is block diagonal with $R$ equal blocks, each of size $N \times N$. More specifically, one of the diagonal blocks of the matrix $\boldsymbol{\Sigma}_{G}(R) - \sigma^2 \boldsymbol{\mathbb{I}} (R)$, say the first block, is written as
\begin{equation}
    (\boldsymbol{\Sigma}_{G}(R) - \sigma^2 \boldsymbol{\mathbb{I}} (R))^{1,1} = \boldsymbol{\Sigma}_g - \sigma^2 diag\underbrace{( \mathbf{1}_{R \times R}, \ldots, \mathbf{1}_{R \times R})}_\text{ $\frac{N}{R}$ times}.
\end{equation}

\noindent In other words, by examining one of the diagonal blocks of the matrix $(\boldsymbol{\Sigma}_{G}(R) - \sigma^2 \boldsymbol{\mathbb{I}} (R))$, we can see that $R$ determines the size of diagonal blocks of $\textcolor{black}{\boldsymbol{\Sigma}}_g$ from which $\sigma^2$ is subtracted.

\noindent Further, we have 
\begin{equation}
    (\boldsymbol{\Sigma}_g)_{k,l} = \sigma^{2} J_{0}\left(\frac{2 \pi(k-\ell)}{N-1} W\right), \text{ for } \ell ,k \in \{1,\ldots,N\}.
\end{equation}

\noindent Therefore, if we denote by $(\boldsymbol{\Sigma}_{G}(R) - \sigma^2 \boldsymbol{\mathbb{I}} (R))_{k,\ell}^{m,n}$ the entry $(k,\ell)$ in the block $(m,n)$ of the matrix $(\boldsymbol{\Sigma}_{G}(R) - \sigma^2 \boldsymbol{\mathbb{I}} (R))$ with $ 1 \leq m,n \leq R$ and $ 1 \leq k,\ell \leq N $, then we can write 
\begin{equation}
    (\boldsymbol{\Sigma}_{G}(R) - \sigma^2 \boldsymbol{\mathbb{I}} (R))_{k,\ell}^{m,n}  = \left\{
    \begin{array}{ll}
        \sigma^2 \left(J_{0}\left(\frac{2 \pi(k-\ell)}{N-1} W\right) - 1 \right) & \mbox{if } m=n \text{ and } (k,\ell) \in S\\
        \sigma^2 J_{0}\left(\frac{2 \pi(k-\ell)}{N-1} W\right) & 
       \mbox{if } m=n \text{ and } (k,\ell) \not\in S\\
        0 & \text{ otherwise,}
    \end{array}
    \right.
\end{equation}
where $S$ is defined as 
$$
S = \{ (k,\ell)\in \mathbb{N}, \exists p \in \{0,\ldots,\frac{N}{R}-1\},\text{ such that } pR+1 \leq k,\ell \leq (p+1)R \}.
$$

\noindent Therefore, the norm 1 of the matrix $(\boldsymbol{\Sigma}_{G}(R) - \sigma^2 \boldsymbol{\mathbb{I}} (R))$ is equal to the norm 1 of one of its blocks (all its blocks are equal) and we can write
\begin{align}
\begin{split}
     &\norm{\boldsymbol{\Sigma}_{G}(R) - \sigma^2 \boldsymbol{\mathbb{I}} (R)}_1  = \norm{(\boldsymbol{\Sigma}_{G}(R) - \sigma^2 \boldsymbol{\mathbb{I}} (R))^{1,1}}_1  \\
    &= \norm{\boldsymbol{\Sigma}_g - \sigma^2 diag\underbrace{( \mathbf{1}_{R \times R}, \ldots, \mathbf{1}_{R \times R})}_\text{ $\frac{N}{R}$ times}}_1\\
    &= \max_{ 0 \leq p \leq \frac{N}{R}-1} ~~ \max_{pR+1 \leq l \leq (p+1)R}  \sum_{k=1}^{pR} \sigma^2 \left | J_{0}\left(\frac{2 \pi(k-\ell)}{N-1} W\right)  \right |  \\ & + \sum_{k=pR+1}^{(p+1)R} \sigma^2 \left | J_{0}\left(\frac{2 \pi(k-\ell)}{N-1} W\right) - 1 \right | \\ & + \sum_{k=pR+R+1}^{N}  \sigma^2 \left| J_{0}\left(\frac{2 \pi(k-\ell)}{N-1} W\right) \right|.
\end{split}
\end{align}

\noindent To solve (\ref{P3}), we would like to find $R_3^*$, a divisor of $N$, that minimizes  $\norm{\boldsymbol{\Sigma}_{G}(R) - \sigma^2 \boldsymbol{\mathbb{I}} (R)}_1 $ given by the expression above. We can see that increasing $R$ increases the size of the diagonal blocks of $\boldsymbol{\Sigma}_g$ from which $\sigma^2$ is subtracted. However, this does not necessarily decrease the maximum absolute column sum of the matrix $\boldsymbol{\Sigma}_g - \sigma^2 diag\underbrace{( \mathbf{1}_{R \times R}, \ldots, \mathbf{1}_{R \times R})}_\text{ $\frac{N}{R}$ times}$. In fact, increasing $R$ decreases the norm 1 of the matrix $\boldsymbol{\Sigma}_g - \sigma^2 diag\underbrace{( \mathbf{1}_{R \times R}, \ldots, \mathbf{1}_{R \times R})}_\text{ $\frac{N}{R}$ times}$, if and only if, subtracting $\sigma^2$ from the entries of the diagonal blocks of $\boldsymbol{\Sigma}_g$ decreases their absolute value. Now, since the entries of $\boldsymbol{\Sigma}_g$ are the highest on the diagonal (\textit{i.e.} $\sigma^2$), then they start decreasing with the off-diagonals. Therefore, subtracting $\sigma^2$ from the entries of the diagonal blocks of $\boldsymbol{\Sigma}_g$ decreases their absolute value, if and only if, they are greater than $\frac{\sigma^2}{2}$. Consequently, we start decreasing the the maximum absolute column sum of the matrix $\boldsymbol{\Sigma}_g - \sigma^2 diag\underbrace{( \mathbf{1}_{R \times R}, \ldots, \mathbf{1}_{R \times R})}_\text{ $\frac{N}{R}$ times}$ by increasing $R$ until we reach $ \sigma^2 J_{0}\left(\frac{2 \pi(R-1)}{N-1} W\right) = \frac{\sigma^2}{2}$. More formally,

\noindent If $R_3^*$ is the greatest divisor of $N$ verifying 
\begin{equation}
    J_{0}\left(\frac{2 \pi(R_3^*-1)}{N-1} W\right) \leq 0.5,
\end{equation}
\noindent then
\begin{equation}
     \norm{\boldsymbol{\Sigma}_{G}(R_3^*) - \sigma^2 \boldsymbol{\mathbb{I}} (R_3^*)}_1  \leq \norm{\boldsymbol{\Sigma}_{G}(R) - \sigma^2 \boldsymbol{\mathbb{I}} (R)}_1  \text{ for all $R$ divisor of $N$}. 
\end{equation}


\bibliographystyle{IEEEtran}
\bibliography{references.bib}

%

\end{document}